\keywords{Finite automata, Omega-Regular Languages}
\newcommand{\commentout}[1]{}
\newcommand{\la}{\langle}
\newcommand{\ra}{\rangle}
\newcommand{\Times}{\!\times\!}
\newcommand{\Lang}[1]{{\llbracket #1 \rrbracket}}
\newcommand{\sema}[1]{{\Lang{#1}}}
\newcommand{\name}[1]{{\textsc{#1}}}
\newcommand{\fdfa}{\name{fdfa}}
\newcommand{\dfa}{\name{dfa}}
\newcommand{\nfa}{\name{nfa}}
\newcommand{\dba}{\name{dba}}
\newcommand{\nba}{\name{nba}}
\newcommand{\dca}{\name{dca}}
\newcommand{\nca}{\name{nca}}
\newcommand{\dpa}{\name{dpa}}
\newcommand{\npa}{\name{npa}}
\newcommand{\buchi}{B$\ddot{\textrm{u}}$chi}
\newcommand{\cobuchi}{co-B$\ddot{\textrm{u}}$chi}
\newcommand{\NN}{\mathbb{N}}
\newcommand{\ldollar}{\ensuremath{L_{\$}}}
\newcommand{\A}{{\mathcal{A}}}
\newcommand{\U}{{\mathcal{U}}}
\newcommand{\F}{{{\mathcal{F}}}}
\newcommand{\B}{{{\mathcal{B}}}}
\newcommand{\D}{{{\mathcal{D}}}}
\newcommand{\Q}{{{\mathcal{Q}}}}
\newcommand{\X}{{{\mathcal{X}}}}
\renewcommand{\H}{{{\mathcal{H}}}}
\renewcommand{\S}{{{\mathcal{S}}}}
\renewcommand{\P}{{{\mathcal{P}}}}
\newcommand{\PP}{{{\mathbf{P}}}}
\newcommand{\initstate}{\iota}
\newcommand{\Sigmasp}{{\Sigma^{*+}}}
\newcommand{\congr}[1]{\approx_{#1}}
\newcommand{\Characterizes}{\equiv}
\newcommand{\con}{\cdot}
\newcommand{\ST}{~|~}
\newcommand{\subproof}[1]{\paragraph*{\textit{#1}}}
\begin{document}

\title{Families of DFAs as Acceptors of $\omega$-Regular Languages\rsuper{*}}
\titlecomment{{\lsuper*}\kern-1.5ptThe present article extends \cite{ABF16}.}

\author[D.~Angluin]{Dana Angluin}	
\address{Yale University, New Haven, CT, USA}	

\author[U.~Boker]{Udi Boker}	
\address{Interdisciplinary Center (IDC), Herzliya, Israel}	

\author[D.~Fisman]{Dana Fisman}	
\address{Ben-Gurion University, Beer-Sheva, Israel}	
\thanks{This research was supported by the United States - Israel Binational Science Foundation (BSF) grant 2016239 and the Israel Science Foundation (ISF) grant 1373/16.}	




\begin{abstract}
Families of \dfa s (\fdfa s) provide an alternative formalism for recognizing $\omega$-regular languages. The motivation for introducing them was a desired correlation between the automaton states and right congruence relations, in a manner similar to the Myhill-Nerode theorem for regular languages. This correlation is beneficial for learning algorithms, and indeed it was recently shown that $\omega$-regular languages can be learned from membership and equivalence queries, using \fdfa s as the acceptors.

In this paper, we look into the question of how suitable \fdfa s are for defining $\omega$-regular languages. Specifically, we look into the complexity of performing Boolean operations, such as complementation and intersection, on \fdfa s, the complexity of solving decision problems, such as emptiness and language containment, and the succinctness of \fdfa s compared to standard deterministic and nondeterministic $\omega$-automata.

We show that \fdfa s enjoy the benefits of deterministic automata with respect to Boolean operations and decision problems. Namely, they can all be performed in nondeterministic logarithmic space. 
We provide polynomial translations of deterministic \buchi\ and \cobuchi\ automata to \fdfa s and of \fdfa s to nondeterministic \buchi\ automata (\nba s). We show that translation of an \nba\ to an \fdfa\ may involve an exponential blowup. Last, we show that \fdfa s are more succinct than deterministic parity automata  (\dpa s) in the sense that translating a \dpa\ to an \fdfa\ can always be done with only a polynomial increase, yet the other direction involves an inevitable exponential blowup in the worst case.
\end{abstract}

\maketitle

\section{Introduction}
The theory of finite-state automata processing infinite words was developed in the early sixties, starting with \buchi~\cite{Buc60} and Muller~\cite{Mul63}, and motivated by problems in logic and switching theory. Today, automata for infinite words are extensively used in verification and synthesis of \emph{reactive systems}, such as operating systems and communication protocols.

An automaton processing finite words makes its decision according to the last visited state. On infinite words, \buchi\ defined that a run is accepting if it visits a designated set of states infinitely often. Since then several other accepting conditions were defined, giving rise to various $\omega$-automata, among which are Muller, Rabin, Streett and parity automata.

The theory of $\omega$-regular languages is more involved than that of finite words. This was first evidenced by \buchi's observation that nondeterministic \buchi\ automata are more expressive than their deterministic counterpart. While for some types of $\omega$-automata the nondeterministic and deterministic variants have the same expressive power, none of them possesses all the nice qualities of acceptors for finite words. In particular, none has a corresponding Myhill-Nerode theorem~\cite{N58}, i.e. a direct correlation between the states of the automaton and the equivalence classes corresponding to the canonical right congruence of the recognized language. 

The absence of a Myhill-Nerode like property in $\omega$-automata has been a major drawback in obtaining learning algorithms for $\omega$-regular languages, a question that has received much attention lately due to applications in verification and synthesis, such as black-box checking~\cite{PVY02}, assume-guarantee reasoning~\cite{AlurCMN05,PasareanuGBCB08, ChenCFTTW10, FengKP11}, error localization~\cite{WeimerN05,CCKKST15}, regular model checking~\cite{VardhanSVA05,NeiderJ13}, finding security bugs~\cite{RaffeltMSM09,ChaluparPPR14,FBJV14}, programming networks~\cite{RaffeltMSM09,YuanAL14} and more. The reason is that learning algorithms typically build on this correspondence between the automaton and the right congruence. 

Recently, several algorithms for learning an unknown $\omega$-regular language were proposed, all using non-conventional acceptors. One uses a reduction due to~\cite{CalbrixNP93} named $\ldollar$-automata of $\omega$-regular languages to regular languages~\cite{FarzanCCTW08}, and the others use a representation termed \emph{families of \dfa s} ~\cite{AngluinF14,LCZL16}. 
Both representations are founded on the following well known property of $\omega$-regular languages: two $\omega$-regular languages are equivalent iff they agree on the set of ultimately periodic words. An ultimately periodic word $uv^\omega$, where $u\in\Sigma^*$ and $v\in\Sigma^+$, can be represented as a pair of finite words $(u,v)$. Both $\ldollar$-automata and families of \dfa s process such pairs and interpret them as the corresponding ultimately periodic words. Families of \dfa s have been shown to be up to exponentially more succinct than  $\ldollar$-automata \cite{AngluinF14}.

A family of \dfa s (\fdfa) is composed of a \emph{leading automaton} $\Q$ with no accepting states and for each state $q$ of $\Q$, a \emph{progress \dfa} $\P_q$. Intuitively, the leading automaton is responsible for processing the non-periodic part $u$, and depending on the state $q$ reached when $\Q$ terminated processing $u$, the respective progress \dfa\ $\P_q$ processes the periodic part $v$, and determines whether the pair $(u,v)$, which  corresponds to $uv^\omega$, is accepted. (The exact definition is more subtle and is provided in Section~\ref{sec:fdfas}.) If the leading automaton has $n$ states and the size of the maximal progress \dfa\ is $k$, we say that the \fdfa\ is of size $(n,k)$. 
An earlier definition of \fdfa s, given in~\cite{Klarlund94}, provided a machine model for the \emph{families of right congruences} of \cite{MalerStaiger97}.\footnote{Another related formalism is of Wilke algebras \cite{Wil91,Wil93}, which are two-sorted algebras equipped with several operations. An $\omega$-language over $\Sigma^\omega$ is $\omega$-regular if and only if there exists a two-sorted morphism from $\Sigma^\infty$ into a finite Wilke structure \cite{Wil91}. A central difference between the FDFA theory and the algebraic theory of recognition by monoids, semigroups, $\omega$-semigroups, and Wilke structures is that the former relates to right-congruences, while the latter is based on two-sided congruences.} They were redefined in \cite{AngluinF14}, where their acceptance criterion was adjusted, and their size was reduced by up to a quadratic factor. We follow the definition of~\cite{AngluinF14}. 

In order for an \fdfa\ to properly characterize an $\omega$-regular language, it must satisfy the \emph{saturation} property: considering two pairs $(u,v)$ and $(u',v')$, if $uv^\omega=u'v'^\omega$ then either both $(u,v)$ and $(u',v')$ are accepted or both are rejected (cf.\ \cite{CalbrixNP93,Saec90}).
Saturated \fdfa s are shown to exactly characterize the set of $\omega$-regular languages.
Saturation is a semantic property, and the check of whether a given \fdfa\ is saturated is shown to be in PSPACE. Luckily, the \fdfa s that result from the learning algorithm of \cite{AngluinF14} are guaranteed to be saturated.

Saturated \fdfa s bring an interesting potential -- they have a Myhill-Nerode like property, and while they are ``mostly'' deterministic, a nondeterministic aspect is hidden in the separation of the prefix and period parts of an ultimately periodic infinite word. This gives rise to the natural questions of how ``dominant'' are the determinism and nondeterminism in \fdfa s, and how ``good'' are they for representing $\omega$-regular languages. 
These abstract questions translate to concrete questions that concern the succinctness of \fdfa s and the complexity of solving their decision problems, as these measures play a key role in the usefulness of applications built on top of them. 

Our purpose in this paper is to analyze the \fdfa\ formalism and answer these questions. Specifically, we ask: What is the complexity of performing the Boolean operations of complementation, union, and intersection on saturated \fdfa s? What is the complexity of solving the decision problems of membership, emptiness, universality, equality, and language containment for saturated \fdfa s? How succinct are saturated \fdfa s, compared to deterministic and nondeterministic $\omega$-automata?

We show that saturated \fdfa s enjoy the benefits of deterministic automata with respect to Boolean operations and decision functions. Namely, the Boolean operations can be performed in logarithmic space, and the decision problems can be solved in nondeterministic logarithmic space. The constructions and algorithms that we use extend their counterparts on standard \dfa s. In particular, complementation of saturated \fdfa s can be obtained on the same structure, and union and intersection is done on a product of the two given structures. The correctness proof of the latter is a bit subtle. 

As for the succinctness, which turns out to be more involved, we show that saturated \fdfa s properly lie in between deterministic and nondeterministic $\omega$-automata. We provide polynomial translations from deterministic $\omega$-automata to \fdfa s and from \fdfa s to nondeterministic $\omega$-automata, and show that an exponential state blowup in the opposite directions is inevitable in the worst case.

Specifically, a saturated \fdfa\ of size $(n,k)$ can always be transformed into an equivalent nondeterministic \buchi\ automaton (\nba) with $O(n^2 k^3)$ states. (Recall that an \fdfa\ of size $(n,k)$ can have up to $n+nk$ states in total, having $n$ states in the leading automaton and up to $k$ states in each of the $n$ progress automata.) As for the other direction, transforming an \nba\ with $n$ states to an equivalent \fdfa\ is shown to be in $2^{\Theta(n \log n)}$.
This is not surprising since, as shown by Michel~\cite{Michel88}, complementing an \nba\ involves a $2^{\Omega(n \log n)}$ state blowup, while \fdfa\ complementation requires no state blowup.

Considering deterministic $\omega$-automata, a \buchi\ or \cobuchi\ automaton (\dba\ or \dca) with $n$ states can be transformed into an equivalent \fdfa\ of size $(n,2n)$, and a deterministic parity automaton (\dpa) with $n$ states and $k$ colors can be transformed into an equivalent \fdfa\ of size $(n,kn)$. As for the other direction, since \dba\ and \dca\ do not recognize all the $\omega$-regular languages, while saturated \fdfa s do, a transformation from an \fdfa\ to a \dba\ or \dca\ need not exist. Comparing \fdfa s to \dpa s, which do recognize all $\omega$-regular languages, we get that \fdfa s can be exponentially more succinct:  We show a family of  languages $\{L_n\}_{n\geq 1}$, such that for every $n$, there exists an \fdfa\ of size $(n+1,n^2)$ for $L_n$, but any \dpa\ recognizing $L_n$ must have at least $2^{n-1}$ states. (A deterministic Rabin or Streett automaton for $L_n$ is also shown to be exponential in $n$, requiring at least $2^{\frac{n}{2}}$ states.)

\section{Preliminaries}
An \emph{alphabet} $\Sigma$ is a finite set of symbols. The set of finite words over $\Sigma$ is denoted by $\Sigma^*$, and the set of infinite words, termed $\omega$-words, over $\Sigma$ is denoted by $\Sigma^\omega$. As usual, we use $x^*$, $x^+$, and $x^\omega$ to denote finite, non-empty finite, and infinite concatenations of $x$, respectively, where $x$ can be a symbol or a finite word. We use $\epsilon$ for the empty word. An infinite word $w$ is \emph{ultimately periodic} if there are two finite words $u\in \Sigma^*$ and $v\in\Sigma^+$, such that $w=uv^\omega$. A \emph{language} is a set of finite words, that is, a subset of $\Sigma^*$, while an $\omega$-language is a set of $\omega$-words, that is, a subset of $\Sigma^\omega$. For natural numbers $i$ and $j$ and a word $w$, we use $[i..j]$ for the set $\{i, i+1, \ldots, j\}$, $w[i]$ for the $i$-th letter of $w$, and $w[i..j]$ for the subword of $w$ starting at the $i$-th letter and ending at the $j$-th letter, inclusive.

An \emph{automaton} is a tuple $\A=\la \Sigma, Q, \initstate, \delta \ra$ consisting of an alphabet $\Sigma$, a finite set $Q$ of
states, an initial state $\initstate\in Q$, and a transition function $\delta: Q \times \Sigma \rightarrow 2^Q$. 
A run of an automaton on a finite word $v=a_1 a_2\ldots a_n$ is a sequence of states ${q_0,q_1,\ldots,q_n}$ such that $q_0=\initstate$, and for each $i\geq 0$, ${q_{i+1}\in\delta(q_i,a_i)}$.  A run on an infinite word is defined similarly and results in an infinite sequence of states.
The transition function is naturally extended to a function $\delta: Q \times \Sigma^*\rightarrow 2^Q$, by defining $\delta(q,\epsilon)=\{q\}$, and $\delta(q,a v)=\cup_{p\in\delta(q,a)}\delta(p,v)$ for ${q\in Q}$, ${a\in\Sigma}$, and ${v\in\Sigma^*}$.
We often use $\A(v)$ as a shorthand for $\delta(\initstate,v)$ and $|\A|$ for the number of states in $Q$.
We use $\A^q$ to denote the automaton $\la \Sigma, Q, q, \delta \ra$ obtained from $\A$ by replacing the initial state with $q$. We say that $\A$ is \emph{deterministic} if $|\delta(q,a)|\leq1$ and \emph{complete} if $|\delta(q,a)|\geq1$, for every $q\in Q$ and $a\in\Sigma$. For simplicity, we consider all automata to be complete. (As is known, every automaton can be linearly translated to an equivalent complete automaton, with respect to the relevant equivalence notion, as defined below.)

By augmenting an automaton with an acceptance condition $\alpha$, thereby obtaining a tuple $\la \Sigma, Q, \initstate,$ $\delta, \alpha \ra$, we get an \emph{acceptor}, a machine that accepts some words and rejects others. An acceptor accepts a word if at least one of the runs on that word is accepting. For finite words the acceptance condition is a set $F \subseteq Q$ of \emph{accepting states}, and a run on a word $v$ is accepting if it ends in an accepting state, i.e., if $\delta(\initstate,v)$ contains an element of $F$. For infinite words, there are various acceptance conditions in the literature; here we mention three: \buchi, \cobuchi, and parity.\footnote{There are other acceptance conditions in the literature, the most known of which are weak, Rabin, Streett, and Muller. The three conditions that we concentrate on are the most used ones; \buchi\ and \cobuchi\ due to their simplicity, and parity due to being the simplest for which the deterministic variant is strong enough to express all $\omega$-regular languages.}
The \buchi\ and \cobuchi\ acceptance conditions are also a set $F \subseteq Q$. A run of a \buchi\ automaton is accepting if it visits $F$ infinitely often. A run of a co-\buchi\ automaton is accepting if it visits $F$ only finitely many times. A parity acceptance condition is a map $\kappa:Q \rightarrow [1..k]$ assigning each state a color (or rank). A run is accepting if the minimal color visited infinitely often is odd.
We use $\sema{\A}$ to denote the set of words accepted by a given acceptor $\A$, and say that $\A$ \emph{accepts} or \emph{recognizes} $\sema{\A}$. Two acceptors $\A$ and $\B$ are \emph{equivalent} if $\sema{\A}=\sema{\B}$.

We use three letter acronyms to describe acceptors, where the first letter is either $\textsc{d}$ or $\textsc{n}$ depending on whether the automaton is \emph{deterministic} or \emph{nondeterministic}, respectively. The second letter is one of $\{\textsc{f,b,c,p}\}$: $\textsc{f}$ if this is an acceptor over finite words, $\textsc{b}$, $\textsc{c}$, or $\textsc{p}$ if it is an acceptor over infinite words with \buchi, \cobuchi, or parity acceptance condition, respectively. The third letter is always $\textsc{a}$ for an acceptor (or automaton). 

For finite words, \nfa\ and \dfa\ have the same expressive power.  A language is said to be \emph{regular} if it is accepted by an \nfa. 
For infinite words, the theory is more involved. While \npa s, \dpa s, and \nba s have the same expressive power, \dba s, \nca s, and \dca s are strictly weaker than \nba s. 
An $\omega$-language is said to be \emph{$\omega$-regular} if it is accepted by an \nba.

\section{Families of DFAs (FDFAs)}
\label{sec:fdfas}
It is well known that two $\omega$-regular languages are equivalent if they agree on the set of ultimately periodic words (this is a consequence of McNaughton's theorem~\cite{McNaughton66}). An ultimately periodic word $uv^\omega$, where $u\in\Sigma^*$ and $v\in\Sigma^+$, is usually represented by the pair $(u,v)$. A canonical representation of an $\omega$-regular language can thus consider only ultimately periodic words, namely define a language of pairs $(u,v)\in \Sigma^* \times \Sigma^+$. Such a representation $\F$ should satisfy the \emph{saturation} property: considering two pairs $(u,v)$ and $(u',v')$, if $uv^\omega=u'v'^\omega$ then either both $(u,v)$ and $(u',v')$ are accepted by $\F$ or both are rejected by $\F$. 

A family of \dfa s (\fdfa) accepts such pairs $(u,v)$ of finite words. Intuitively, it consists of a deterministic \emph{leading automaton} $\Q$ with no acceptance condition that runs on the prefix-word $u$, and for each state $q$ of $\Q$, a \emph{progress automaton} $\P_q$, which is a \dfa\ that runs on the period-word $v$. 

A straightforward definition of acceptance for a pair $(u,v)$, could have been that the run of the leading automaton $\Q$ on $u$ ends at some state $q$, and the run of the progress automaton $\P_q$ on $v$ is accepting. This goes along the lines of $\ldollar$-automata \cite{CalbrixNP93}. However, such an acceptance definition does not fit well the saturation requirement, and might enforce very large automata \cite{AngluinF14}. The intuitive reason is that every progress automaton might need to handle the period-words of all prefix-words.

To better fit the saturation requirement, the acceptance condition of an \fdfa\ is defined with respect to a \emph{normalization} of the input pair $(u,v)$. 
The normalization is a new pair $(x,y)$, such that $xy^\omega=uv^\omega$, and in addition, the run of the leading automaton $\Q$ on $xy^i$ ends at the same state for every natural number $i$. Over the normalized pair $(x,y)$, the acceptance condition follows the straightforward approach discussed above. This normalization resembles the implicit flexibility in the acceptance conditions of $\omega$-automata, such as the \buchi\ condition, and allows saturated \fdfa s to be up to exponentially more succinct than $\ldollar$-automata \cite{AngluinF14}. 

Below, we formally define an \fdfa, the normalization of an input pair $(u,v)$, and the acceptance condition. We shall use $\Sigmasp$ as a shorthand for $\Sigma^* \times \Sigma^+$, whereby the input to an \fdfa\ is a pair $(u,v)\in\Sigmasp$.

\begin{defi}[A family of \dfa s (\fdfa)]\hspace{-0.5em}\footnote{The \fdfa s defined here follow the definition in~\cite{AngluinF14}, which is a little different from the definition of \fdfa s in \cite{Klarlund94}; the latter provide a machine model for the families of right congruences introduced in~\cite{MalerStaiger97}. The main differences between the two definitions are: i) In \cite{Klarlund94}, a pair $(u,v)$ is accepted by an \fdfa\ $\F=(\Q,\PP)$ if there is some factorization $(x,y)$ of $(u,v)$, such that $\Q(x)=q$ and $\P_q$ accepts $y$; and ii) in \cite{Klarlund94}, the \fdfa\ $\F$ should also satisfy the constraint that for all words $u\in\Sigma^*$ and $v,v'\in\Sigma^+$, if $\P_{\Q(u)}(v)= \P_{\Q(u)}(v')$ then $\Q(uv) = \Q(uv')$.}
\begin{itemize}
\item A \emph{family of \dfa s} (\fdfa) is a pair $(\Q, \PP)$, where $\Q=(\Sigma,Q,\initstate,\delta)$ is a deterministic \emph{leading} automaton, and $\PP$ is a set of $|\Q|$ \dfa s, including for each state $q\in Q$, a \emph{progress} \dfa\  $\P_q =(\Sigma,P_q,\initstate_q,\delta_q, F_q)$.
\item Given a pair $(u,v)\in\Sigmasp$ and an automaton $\A$, the \emph{normalization} of $(u,v)$ w.r.t $\A$ is the pair $(x,y)\in\Sigmasp$, such that $x = {u}{v}^i$, $y={v}^j$, and $i \geq 0$, $j\geq 1$ are the smallest numbers for which $\A({u}{v}^i)=\A({u}{v}^{i+j})$. (By ``smallest numbers'' $(i,j)$, one can consider lexicographic order, having the smallest $i$, and for it the smallest $j$. By Fine and Wilf's theorem, however, there is no ambiguity by just requiring    ``smallest numbers''. Notice that since we consider complete automata, such a unique pair $(x,y)$ is indeed guaranteed.)
\item Let $\F = (\Q,\PP)$ be an \fdfa, $(u,v)\in\Sigmasp$, and $(x,y)\in\Sigmasp$ the normalization of $(u,v)$ w.r.t\ $\Q$. We say that $(u,v)$ is \emph{accepted} by $\F$ iff $\Q(x)=q$ for some state $q$ of $\Q$ and $\P_{q}(y)$ is an accepting state of $\P_q$.
\item We use $\Lang{\F}$ to denote the set of pairs accepted by $\F$. 
\item We define the \emph{size} of $\F$, denoted by $|\F|$, as the pair $(|\Q|, \max \{|\P_q|\}_{q\in \Q})$.
\item An \fdfa\ $\F$ is \emph{saturated} if for every two pairs $(u,v)$ and $(u',v')$ such that $uv^\omega=u'v'^\omega$, either both $(u,v)$ and $(u',v')$ are in  $\Lang{\F}$ or both are not in $\Lang{\F}$.
\end{itemize}
\end{defi}
A saturated \fdfa\ can be used to characterize an $\omega$-regular language (see Theorem~\ref{thm:Characterize}), while an unsaturated \fdfa\ cannot.

An unsaturated \fdfa\ is depicted in Figure~\ref{fig:UnsaturatedFdfa} on the left. Consider the pairs $(b,a)$ and $(ba,aa)$.
The former is normalized into $(b,aa)$, as the run of the leading automaton $\U$ on ``$b$'' reaches the state $l$, and then when $\U$ iterates on ``$a$'', the first state to be revisited is $l$, which happens after two iterations. The latter is normalized into $(ba,aa)$, namely it was already normalized. Now, $(b,a)$ is accepted since in the run on its normalization $(b,aa)$,  $\U$ reaches the state $l$ when running on ``$b$'', and the progress automaton $\P_l^\U$ accepts ``$aa$''. On the other hand, $(ba,aa)$ is not accepted since the run of $\U$ on ``$ba$'' reaches the state $r$, and the progress automaton $\P_r^\U$ does not accept ``$aa$''. Yet, $ba^\omega=ba(aa)^\omega$, so the \fdfa\ should have either accepted or rejected both $(b,a)$ and $(ba,aa)$, were it saturated, which is not the case.

A saturated \fdfa\ is depicted in Figure~\ref{fig:SaturatedFdfa} on the right. It accepts pairs of the forms $(\Sigma^*, a^+)$ and $(\Sigma^*, b^+)$, and characterizes the $\omega$-regular language $(a+b)^*(a^{\omega} + b^{\omega})$ of words in which there are eventually only $a$'s or only $b$'s.

\begin{figure}[t]
\centering
\scalebox{0.7}{
\begin{tikzpicture}[->,>=stealth',shorten >=1pt,auto,node distance=1.8cm,semithick,initial text=,initial where=left]


\node[label]           			(_Q)    					{};
\node[label]           			(Q)   [below of=_Q]	{$\U:$};
\node[state,initial]           	(Q1) [right of=Q]  		{$l$};
\node[state]           			(Q2) [right of=Q1]  		{$r$};

\path (Q1) [loop above] 	edge node {$b$} 	(Q1);
\path (Q1) [bend left]		edge node {$a$} 	(Q2);
\path (Q2) [bend left] 		edge node {$a,b$} (Q1);

\node[label]           				(_Pl) [right of=Q2, node distance=-1cm]   	{};
\node[label]           				(__Pl) [above of=_Pl]  {};
\node[label]           				(Pl)   [right of=__Pl] 	{$\P_l^\U:$};
\node[state,initial,accepting]	(Pl1) [right of=Pl]  		{};

\path (Pl1)[loop below] 	edge node {$a,b$} (Pl1);

\node[label]           				(_Pr) [below of=Pl]  	{};
\node[label]           				(Pr)   [below of=_Pr]  	{$\P_r^\U:$};
\node[state,initial]           		(Pr1) [right of=Pr]  		{};
\node[state,accepting]          (Pr2) [right of=Pr1]  	{};

\path (Pr1) [loop above] 	edge node {$a$} 	(Pr1);
\path (Pr1) 					edge node {$b$} 	(Pr2);
\path (Pr2) [loop above] 	edge node {$a,b$} (Pr2);


\node[label]           			(_SQ) [right of=_Q, node distance=11cm]   					{};
\node[label]           			(SQ)   [below of=_SQ]	{$\S:$};
\node[state,initial]           	(SQ1) [right of=SQ]  		{$l$};
\node[state]           			(SQ2) [right of=SQ1]  		{$r$};

\path (SQ1) [loop above] 	edge node {$a$} 	(SQ1);
\path (SQ1) [bend left]		edge node {$b$} 	(SQ2);
\path (SQ2) [loop above] 	edge node {$b$} 	(SQ2);
\path (SQ2) [bend left] 		edge node {$a$}	(SQ1);

\node[label]           				(_SPl) [right of=SQ2, node distance=-1cm]   	{};
\node[label]           				(__SPl) [above of=_SPl]  {};
\node[label]           				(SPl)   [right of=__SPl] 	{$\P_l^\S:$};
\node[state,initial,accepting]	(SPl1) [right of=SPl]  		{};
\node[state]          				(SPl2) [right of=SPl1]  	{};

\path (SPl1) [loop below] 	edge node {$a$} 	(SPl1);
\path (SPl1) 					edge node {$b$} 	(SPl2);
\path (SPl2) [loop below] 	edge node {$a,b$} (SPl2);

\node[label]           				(_SPr) [below of=SPl]  	{};
\node[label]           				(SPr)   [below of=_SPr]  	{$\P_r^\S:$};
\node[state,initial,accepting]	(SPr1) [right of=SPr]  		{};
\node[state]          				(SPr2) [right of=SPr1]  	{};

\path (SPr1) [loop above] 	edge node {$b$} 	(SPr1);
\path (SPr1) 					edge node {$a$} 	(SPr2);
\path (SPr2) [loop above] 	edge node {$a,b$} (SPr2);

\end{tikzpicture}
}
\caption{Left: an unsaturated \fdfa\ with the leading automaton $\U$ and progress \dfa s $\P_l^\U$ and $\P_r^\U$. Right: a saturated \fdfa\ with the leading automaton $\S$ and progress \dfa s $\P_l^\S$ and $\P_r^\S$. 
}
\label{fig:UnsaturatedFdfa}
\label{fig:SaturatedFdfa}
\end{figure}

\section{Boolean Operations and Decision Procedures}
We provide below algorithms for performing the Boolean operations of complementation, union, and intersection on saturated \fdfa s, and deciding the basic questions on them, such as emptiness, universality, and language containment. All of these algorithms can be done in nondeterministic logarithmic space, taking advantage of the partial deterministic nature of \fdfa s.\footnote{Another model that lies in between deterministic and nondeterministic automata are ``semi-deterministic B\"uchi automata'' \cite{VardiWolper86}, which are B\"uchi automata that are deterministic in the limit: from every accepting state onward, their behaviour is deterministic. Yet, as opposed to \fdfa s, complementation of semi-deterministic B\"uchi automata might involve an exponential state blowup \cite{BHSST16}.} We conclude the section with the decision problem of whether an arbitrary \fdfa\ is saturated, showing that it can be resolved in polynomial space.

\subsubsection*{Boolean operations}
Saturated \fdfa s are closed under Boolean operations as a consequence of Theorem~\ref{thm:Characterize}, which shows that they characterize exactly the set of $\omega$-regular languages. We provide below explicit algorithms for these operations, showing that they can be done effectively.

Complementation of an \fdfa\ is simply done by switching between accepting and non-accepting states in the progress automata, as is done with \dfa s.

\begin{thm}\label{thm:fdfa-complementation}
Let $\F$ be an \fdfa. There is a constant-space algorithm to obtain an \fdfa\ $\F^c$, such that $\Lang{\F^c}=\Sigmasp\setminus\Lang{\F}$, $|\F^c|=|\F|$, and $\F^c$ is saturated iff $\F$ is.
\end{thm}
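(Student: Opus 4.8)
The plan is to complement $\F$ exactly as one complements an ordinary \dfa, but applied to each progress automaton individually while leaving the leading automaton untouched. Concretely, given $\F=(\Q,\PP)$ with progress automata $\P_q=(\Sigma,P_q,\initstate_q,\delta_q,F_q)$, I would define $\F^c=(\Q,\PP^c)$, where $\PP^c$ retains the same leading automaton $\Q$ and replaces each $\P_q$ by $\P_q^c=(\Sigma,P_q,\initstate_q,\delta_q,P_q\setminus F_q)$. Since only the accepting sets are altered and no states are added or removed, $|\F^c|=|\F|$ is immediate, and the construction visits each progress state once, flipping a single acceptance bit, so it runs in constant space.

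The heart of the correctness argument is the observation that the leading automaton is identical in $\F$ and $\F^c$. Because the normalization of an input pair $(u,v)$ is defined purely with respect to the leading automaton, the normalized pair $(x,y)$ is the same for the two \fdfa s. Fixing $(u,v)$, letting $(x,y)$ be its normalization and $q=\Q(x)$, and using that the progress automata are deterministic and complete so that $\P_q(y)$ is a single well-defined state, I would argue that $(u,v)\in\Lang{\F^c}$ iff $\P_q^c(y)\in P_q\setminus F_q$ iff $\P_q(y)\notin F_q$ iff $(u,v)\notin\Lang{\F}$. This chain yields $\Lang{\F^c}=\Sigmasp\setminus\Lang{\F}$.

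For the saturation claim I would use that $\Lang{\F^c}$ is the exact complement of $\Lang{\F}$ within $\Sigmasp$. Saturation quantifies over pairs $(u,v),(u',v')$ with $uv^\omega=u'v'^\omega$ and requires that they be classified identically. For any such pair, $(u,v)\in\Lang{\F^c}$ iff $(u,v)\notin\Lang{\F}$, and likewise for $(u',v')$; hence $\F^c$ agrees on the pair iff $\F$ does, so $\F^c$ is saturated exactly when $\F$ is.

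I expect the only real subtlety—and the one point worth stating explicitly—to be the invariance of the normalization under the construction. It is tempting to worry that altering acceptance could interact with normalization, but since normalization is a function of $\Q$ alone and $\Q$ is preserved, the two \fdfa s feed the identical normalized pair $(x,y)$ into structurally identical progress automata; the determinism and completeness of those automata then turn ``non-acceptance'' into a clean set complement, which is precisely what a single bit-flip per state realizes.
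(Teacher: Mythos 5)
Your proposal is correct and matches the paper's proof essentially verbatim: the same construction (keep $\Q$, replace each $F_q$ by $P_q\setminus F_q$), the same correctness chain via the fact that normalization depends only on the unchanged leading automaton so $(u,v)\in\Lang{\F^c}$ iff $y\notin\Lang{\P_{\Q(x)}}$ iff $(u,v)\notin\Lang{\F}$, and the same two-directional argument that saturation is preserved under exact complementation. Your explicit remark that determinism and completeness of the progress \dfa s make $\P_q(y)$ a single well-defined state is a point the paper leaves implicit, but it is the same argument.
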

\begin{proof}
Let $\F=(\Q,\PP)$, where for each state $q$ of $\Q$, $\PP$ has the \dfa\ $\P_q=(\Sigma,P_q,\initstate_q,\delta_q,F_q)$. We define $\F^c$ to be the \fdfa\ $(\Q,\PP^c)$, where for each state $q$ of $\Q$, $\PP^c$ has the \dfa\ $\P^c_q=(\Sigma,P_q,\initstate_q,\delta_q, P_q\setminus F_q)$.
We claim that $\F^c$ recognizes the complement language of $\F$. Indeed, let $(u,v)\in\Sigmasp$ and $(x,y)$ its normalization with respect to $\Q$. Then $(u,v)\in\Lang{\F}$ iff $y\in\Lang{\P_{\Q(x)}}$. Thus $(u,v)\notin\Lang{\F}$ iff $y\notin\Lang{\P_{\Q(x)}}$ iff $y\in\Lang{\P^c_{\Q(x)}}$ iff $(u,v)\in\Lang{\F^c}$. 

Since $\F$ is saturated, so is $\F^c$, as for all pairs $(u,v)$ and $(u',v')$ such that $uv^\omega=u'v'^\omega$, $\F$ either accepts or rejects them both, implying that $\F^c$ either rejects or accepts them both, respectively.
\end{proof}

Union and intersection of saturated \fdfa s also resemble the case of \dfa s, and are done by taking the product of the leading automata and each pair of progress automata. Yet, the correctness proof is a bit subtle, and relies on the following lemma, which shows that for a normalized pair $(x,y)$, the period-word $y$ can be manipulated in a certain way, while retaining normalization.

\begin{lem}
\label{lem:normalization}
Let $\Q$ be an automaton, and let $(x,y)$ be the normalization of some $(u,v)\in\Sigmasp$ w.r.t.\ $\Q$.
Then for every $i\geq 0$, $j\geq 1$ and finite words $y',y''$ such that $y=y'y''$, we have that $(xy^iy',(y''y')^j)$ is the normalization of itself w.r.t.\ $\Q$.
\end{lem}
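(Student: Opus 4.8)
The plan is to reduce the assertion to a single looping equation on the leading automaton, namely $\Q(X)=\Q(XY)$ where I abbreviate $X=xy^iy'$ and $Y=(y''y')^j$, and then to verify that equation by a short computation exploiting the loop property inherited from the original normalization.

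First I would record the only feature of the given normalization $(x,y)$ that the proof uses: reading $y$ from $\Q(x)$ loops, i.e.\ $\delta(\Q(x),y)=\Q(x)$, equivalently $\Q(x)=\Q(xy)$. This is immediate from the definition, since $x=uv^{i_0}$ and $y=v^{j_0}$ were chosen with $\Q(uv^{i_0})=\Q(uv^{i_0+j_0})$, so $\Q(xy)=\Q(uv^{i_0+j_0})=\Q(uv^{i_0})=\Q(x)$. In particular $\delta(\Q(x),y^m)=\Q(x)$ for every $m\ge 0$.

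Next I would unwind what ``$(X,Y)$ is its own normalization'' means and observe that it collapses to the single equation above. By definition the normalization of $(X,Y)$ is $(XY^{s},Y^{t})$ for the smallest pair $(s,t)$, with $s\ge 0$, $t\ge 1$, satisfying $\Q(XY^{s})=\Q(XY^{s+t})$; so $(X,Y)$ is its own normalization precisely when that smallest pair is $(0,1)$. Since $0$ and $1$ are already the least admissible values of $s$ and $t$, the pair $(0,1)$ is smallest as soon as it is admissible at all, that is, as soon as $\Q(X)=\Q(XY)$. Thus the whole lemma reduces to proving $\Q(X)=\Q(XY)$.

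Finally I would establish that equation directly. Writing $q=\Q(x)$ and using $\delta(q,y^i)=q$, I get $\Q(X)=\delta(q,y^iy')=\delta(q,y')=:p$. For $\Q(XY)=\delta\big(p,(y''y')^j\big)$ it then suffices to show that $y''y'$ fixes $p$, because any power of a self-loop word stays put. Using $y=y'y''$ (so that $y'y''y'=yy'$) together with $\delta(q,y)=q$,
\[
\delta(p,y''y')=\delta(q,y'y''y')=\delta(q,yy')=\delta(q,y')=p ,
\]
whence $\Q(XY)=p=\Q(X)$, as required. The only subtle point, and what I would call the crux, is conceptual rather than computational: seeing that the minimality demanded by the definition of normalization is automatically met at the least indices $(0,1)$, so that being ``self-normalizing'' is equivalent to the bare loop equation $\Q(X)=\Q(XY)$. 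After that reduction the verification is the one-line manipulation above, powered by the identity $y'y''y'=yy'$ and the inherited loop property $\delta(\Q(x),y)=\Q(x)$. I would also phrase all transition manipulations through the extended $\delta$, so the argument reads identically whether $\Q$ is deterministic---with each $\Q(\cdot)$ a state---or nondeterministic, with each $\Q(\cdot)$ a set of states.
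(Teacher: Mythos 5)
Your proof is correct and takes essentially the same route as the paper's: both rest on the inherited loop $\delta(\Q(x),y)=\Q(x)$ together with the fact that $(y''y')^j$ is a rotated power of $y$, the paper phrasing this via the ultimately periodic run on $xy^\omega$ while you phrase it algebraically through the extended transition function. Your explicit reduction of self-normalization to the single equation $\Q(X)=\Q(XY)$, noting that minimality is automatic at the least admissible indices $(0,1)$, is precisely the point the paper handles more informally with its remark that position $|x_1|$ is the first repeated position from $|x_1|$ onwards.
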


\begin{proof}
Let $x_1=xy^iy'$ and $y_1=(y''y')^j$. 
Since $(x,y)$ is normalized w.r.t.\ $\Q$, we know that the run of $\Q$ on $xy^\omega$ is of the form $q_0,q_1,\ldots,q_{k-1}, (q_k, q_{k+1},$ $q_{k+2}, \ldots, q_m)^\omega$, where $|x|=k$ and $|y|=(m-k)+1$. 
As $xy^\omega=x_1y_1^\omega$, the run of $\Q$ on $x_1 y_1^\omega$ is identical. Since $x$ is a prefix of $x_1$, the position $|x_1|$ lies within the repeated period, implying that $|x_1|$ is the first position, from $|x_1|$ onwards, that is repeated along the aforementioned run. Since $y_1$ is a cyclic repetition of $y$, and $\Q$ loops back over $y$, it also loops back over $y_1$. Thus $(x_1,y_1)$ is  the normalization of itself w.r.t.\ $Q$.
\end{proof}

We continue with the union and intersection of saturated \fdfa s.

\begin{thm}\label{thm:fdfa-union-and-intersection}
Let $\F_1$ and $\F_2$ be saturated \fdfa s of size $(n_1,k_1)$ and $(n_2,k_2)$, respectively. There exist logarithmic-space algorithms to obtain saturated \fdfa s $\H$ and $\H'$ of size $(n_1 n_2,\ k_1  k_2)$, such that $\Lang{\H}=\Lang{\F_1}\cap\Lang{\F_2}$ and $\Lang{\H'}=\Lang{\F_1}\cup\Lang{\F_2}$.
\end{thm}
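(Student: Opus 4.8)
The plan is to lift the standard product construction for \dfa s. For both $\H$ and $\H'$ I would take the leading automaton to be $\Q_1\Times\Q_2$, and attach to each product state $(q_1,q_2)$ the progress \dfa\ obtained as the product of the progress \dfa s $\P^1_{q_1}$ and $\P^2_{q_2}$, with the conjunctive acceptance condition (accepting iff both components accept) for $\H$ and the disjunctive one for $\H'$. This gives leading size $n_1 n_2$ and progress size at most $k_1 k_2$, as claimed, and it is computable in logarithmic space, since the states and transitions of a product are produced from the two inputs without additional storage.

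The real content is correctness, i.e.\ $\Lang{\H}=\Lang{\F_1}\cap\Lang{\F_2}$ (and dually $\Lang{\H'}=\Lang{\F_1}\cup\Lang{\F_2}$). Fix $(u,v)\in\Sigmasp$, let $(x,y)$ be its normalization w.r.t.\ $\Q_1\Times\Q_2$, and let $(q_1,q_2)$ be the state it reaches. By the definition of product acceptance, $(u,v)\in\Lang{\H}$ iff $\P^1_{q_1}(y)$ and $\P^2_{q_2}(y)$ are both accepting, so everything reduces to the bridging claim: for each $i\in\{1,2\}$, $(u,v)\in\Lang{\F_i}$ iff $\P^i_{q_i}(y)$ is accepting. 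The subtlety is that $y$ is the \emph{joint} period, read from a prefix $x$ that in general matches neither the prefix nor the period selected by the normalization of $(u,v)$ w.r.t.\ $\Q_i$ alone; hence $\P^i_{q_i}(y)$ is not literally the test that $\F_i$'s acceptance definition performs, and the two have to be reconciled.

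I would settle the bridging claim in two moves. First, since $xy^\omega=uv^\omega$ and $\F_i$ is saturated, $(u,v)\in\Lang{\F_i}$ iff $(x,y)\in\Lang{\F_i}$; this is precisely the step that removes the need to transform $y$ into $\F_i$'s own normal period, and is why saturation must be assumed. Second, I would observe that $(x,y)$ is its own normalization w.r.t.\ each $\Q_i$: the loop $(\Q_1\Times\Q_2)(x)=(\Q_1\Times\Q_2)(xy)$ projects to $\Q_i(x)=\Q_i(xy)$, which forces the minimal offset/period of $(x,y)$ under $\Q_i$ to be $(0,1)$. This self-normalization of a product-normalized period under a single component automaton is exactly the phenomenon licensed by Lemma~\ref{lem:normalization}. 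Consequently $(x,y)\in\Lang{\F_i}$ iff $\P^i_{\Q_i(x)}(y)=\P^i_{q_i}(y)$ is accepting, which proves the claim; the disjunctive version yields $\H'$ verbatim.

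Saturation of $\H$ and $\H'$ then comes for free: if $uv^\omega=u'v'^\omega$ then saturation of each $\F_i$ gives $(u,v)\in\Lang{\F_i}\Leftrightarrow(u',v')\in\Lang{\F_i}$, and both conjunction and disjunction preserve this biconditional. I expect the bridging claim --- and inside it the self-normalization of the product-normalized pair with respect to each component --- to be the main obstacle, since the whole construction hinges on the single period $y$ being feedable, unchanged, into \emph{both} progress automata, and it is Lemma~\ref{lem:normalization} together with saturation that makes this legitimate. Verifying that the conjunctive/disjunctive product of two \dfa s is again a \dfa\ of the stated size is routine.
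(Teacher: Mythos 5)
Your proposal is correct, uses exactly the paper's construction (product leading automaton, conjunctive/disjunctive products of the progress \dfa s, with the same logarithmic-space argument), and applies saturation at exactly the same pivot as the paper does, namely $(u,v)\in\Lang{\F_i}$ iff $(x,y)\in\Lang{\F_i}$ because $uv^\omega=xy^\omega$. Where you genuinely diverge is the self-normalization step, and your route is simpler. The paper first identifies the product normalization explicitly: letting $(uv^{i_1},v^{j_1})$ and $(uv^{i_2},v^{j_2})$ be the normalizations of $(u,v)$ w.r.t.\ $\Q_1$ and $\Q_2$, it proves that the normalization w.r.t.\ $\Q_1\Times\Q_2$ is $(uv^{i},v^{j})$ with $i=\max(i_1,i_2)$ and $j=\mathrm{lcm}(j_1,j_2)$, and then invokes Lemma~\ref{lem:normalization}, viewing $(x,y)$ as a shifted power of $(x_1,y_1)$, to conclude that $(x,y)$ is its own normalization w.r.t.\ each $\Q_i$. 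You bypass both the $\max/\mathrm{lcm}$ computation and the lemma: the product loop $(\Q_1\Times\Q_2)(x)=(\Q_1\Times\Q_2)(xy)$ projects componentwise to $\Q_i(x)=\Q_i(xy)$, and since the offsets $(0,1)$ are the smallest admissible in the definition of normalization, $(x,y)$ is its own normalization w.r.t.\ $\Q_i$ outright. This direct argument is sound and arguably cleaner than the paper's. One attribution quibble: you say this phenomenon is ``licensed by Lemma~\ref{lem:normalization},'' but the lemma's hypothesis is that $(x,y)$ is already a normalization w.r.t.\ the \emph{same} automaton $\Q_i$ --- which is precisely what is being established, so the lemma is not directly applicable in your setup (the paper applies it relative to $(x_i,y_i)$, whose normalization status w.r.t.\ $\Q_i$ is given); fortunately your projection-plus-minimality argument carries the step on its own, so nothing is missing. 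The saturation closure at the end matches the paper verbatim.
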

\begin{proof}
The constructions of the union and intersection of \fdfa s are similar, only differing by the accepting states. We shall thus describe them together.

\subproof{Construction}

Given two automata $\A_1$ and $\A_2$, where $\A_i = (\Sigma, A_i,\initstate_i,\delta_i)$, we denote by $\A_1\Times\A_2$ the product automaton
$(\Sigma,A_1\Times A_2,(\initstate_1,\initstate_2),\delta_\times)$, where for every $\sigma\in\Sigma$, $\delta_\times((q_1,q_2),\sigma)=(\delta(q_1,\sigma),\delta(q_2,\sigma))$. 

Given two \dfa s $\D_1=(\A_1,F_1)$ and $\D_2=(\A_2,F_2)$, over the automata $\A_1$ and $\A_2$, and with the accepting states $F_1$ and $F_2$, respectively, we define the \dfa s $\D_1\otimes \D_2$ and $\D_1\oplus \D_2$ as follows:
\begin{itemize}
\item $\D_1\otimes \D_2 = (\A_1\Times\A_2,F_1 \Times F_2)$
\item $\D_1\oplus \D_2 = (\A_1\Times\A_2,F_1 \Times A_2 \cup A_1 \Times F_2)$
\end{itemize}
Given two sets of \dfa s $\PP_1$ and $\PP_2$, we define the sets of \dfa s $\PP_1\otimes \PP_2$ and $\PP_1\oplus \PP_2$ as follows:
\begin{itemize}
\item $\PP_1\otimes \PP_2 = \{ \D_1\otimes \D_2 \ST \D_1 \in \PP_1 \mbox{ and } \D_2 \in \PP_2\}$
\item $\PP_1\oplus \PP_2 = \{ \D_1\oplus \D_2 \ST \D_1 \in \PP_1 \mbox{ and } \D_2 \in \PP_2\}$
\end{itemize}
Given saturated \fdfa s  $\F_1=(\Q_1,\PP_1)$ and $\F_2=(\Q_2,\PP_2)$, we claim that $\H=(\Q_1 \times \Q_2, \PP_1 \otimes \PP_2)$ and $\H'=(\Q_1 \times \Q_2, \PP_1 \oplus \PP_2)$ are saturated \fdfa s that recognize the intersection and union of $\Lang{\F_1}$ and $\Lang{\F_2}$, respectively. 

Notice that the number of states in $\H$ and $\H'$ is quadratic in the number of states in $\F_1$ and $\F_2$, yet the algorithm can generate the representation of $\H$ and $\H'$ in space logarithmic in the size of $\F_1$ and $\F_2$: It sequentially traverses the states of $\Q_1$, and for each of its states, it sequentially traverses the states of $\Q_2$. Thus, it should only store the currently traversed states in $\Q_1$ and $\Q_2$, where each of them requires a storage space of size logarithmic in the number of states in $\Q_1$ and $\Q_2$, respectively. The same holds for generating the product of $\PP_1$ and $\PP_2$.

\subproof{Correctness}

Consider a pair $(u,v)\in\Sigmasp$. Let  $(x_1,y_1)$ and $(x_2,y_2)$ be its normalization with respect to $\Q_1$ and $\Q_2$, respectively, where $x_1={u}{v}^{i_1}$, $y_1={v}^{j_1}$, $x_2={u}{v}^{i_2}$, and $y_2={v}^{j_2}$.
Let $i = \max(i_1,i_2)$ and $j$ be the least common multiple of $(j_1,j_2)$. Define $x={u}{v}^{i}$ and $y={v}^{j}$. 

Observe that the normalization of $(u,v)$ with respect to $\Q_1 \Times \Q_2$ is $(x,y)$: i) The equality $\Q_1 \Times \Q_2(x)=\Q_1 \Times \Q_2(xy)$ follows from taking $i$ to be bigger than both $i_1$ and $i_2$, which guarantees that further concatenations of $v$ will be along a cycle w.r.t.\ both $Q_1$ and $Q_2$, and taking $j$ to be multiple of both $j_1$ and $j_2$, which guarantees that both $Q_1$ and $Q_2$ complete a cycle along $y$. ii) The minimality of $i$ follows from the fact that it is equal to either $i_1$ or $i_2$, as a smaller number will contradict the minimality of either $i_1$ or $i_2$, and the minimality of $j$ follows from the fact that it is the minimal number divided by both $j_1$ and $j_2$, as a number not divided by one of them will not allow either $Q_1$ or $Q_2$ to complete a cycle. 

We have ${\Q_1\Times\Q_2}\,(xy)={\Q_1\Times\Q_2}\,(x)=(\Q_1(x),\Q_2(x))$. 
Since $xy^\omega=x_1y_1^\omega$ and $\F_1$ is saturated, we get that $(x,y)\in\Lang{\F_1}$ iff $(x_1,y_1)\in\Lang{\F_1}$. 
Since the pair $(x,y)$ satisfies the requirements of Lemma~\ref{lem:normalization} w.r.t.\ $(x_1,y_1)$ and $\Q_1$, it follows that $(x,y)$ is a normalization of itself w.r.t.\ $\Q_1$. Thus, $y\in\P_{\Q_1(x)}$ iff $y_1\in\P_{\Q_1(x_1)}$.
Analogously, $y\in\P_{\Q_2(x)}$ iff $y_2\in\P_{\Q_2(x_2)}$.  

Hence, $(u,v)\in\Lang{\F_1}\cap \Lang{F_2}$ iff 
 $(y_1\in\P_{\Q_1(x_1)}$ and $y_2\in\P_{\Q_2(x_2)})$ iff
 $y\in \P_{\Q_1(x)}\otimes \P_{\Q_2(x)}$ iff
$(u,v)\in\H$. 
Similarly, 
$(u,v)\in\Lang{\F_1}\cup \Lang{F_2}$ iff 
 $(y_1\in\P_{\Q_1(x_1)}$ or $y_2\in\P_{\Q_2(x_2)})$ iff
$y\in \P_{\Q_1(x)}\oplus \P_{\Q_2(x)}$  iff $(u,v)\in\H'$.  

The saturation of $\H$ and $\H'$ directly follows from the above proof of the languages they recognize: consider two pairs $(u,v)$ and $(u',v')$, such that $uv^\omega=u'v'^\omega$. Then, by the saturation of $\F_1$ and $\F_2$, both pairs either belong, or not, to each of $\Lang{\F_1}$ and $\Lang{\F_2}$. Hence, both pairs belong, or not, to each of $\Lang{\H}=\Lang{\F_1}\cap\Lang{\F_2}$ and $\Lang{\H'}=\Lang{\F_1}\cup\Lang{\F_2}$.
\end{proof}

\subsubsection*{Decision procedures}
All of the basic decision problems can be resolved in nondeterministic logarithmic space, using the Boolean operations above and corresponding decision algorithms for \dfa s.

The first decision question to consider is that of \emph{membership}: given a pair $(u,v)$ and an \fdfa\ $\F=(\Q,\PP)$, does $\F$ accept $(u,v)$? The question is answered by normalizing $(u,v)$ into a pair $(x,y)$ and evaluating the runs of $\Q$ over $x$ and of $\P_{\Q(x)}$ over $y$. A normalized pair is determined by traversing along $\Q$, making up to $|Q|$ repetitions of $v$. Notice that memory wise, $x$ and $y$ only require a logarithmic amount of space, as they are of the form $x=uv^i$ and $y=v^j$, where the representation of $i$ and $j$ is bounded by $\log |Q|$.
The overall logarithmic-space solution follows from the complexity of algorithms for deterministically traversing along an automaton. 

\begin{prop}\label{prop:Membership}
Given a pair $(u,v)\in\Sigmasp$ and an \fdfa\ $\F$ of size $(n,k)$, the membership question, of whether $(u,v)\in\Lang{\F}$, can be resolved in deterministic space of $O(\log n+\log k)$.
\end{prop}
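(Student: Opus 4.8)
The plan is to give a single deterministic algorithm that first computes the normalization $(x,y)$ of $(u,v)$ with respect to $\Q$, and then simulates the appropriate progress automaton on the period word, keeping in work memory only a constant number of states and counters at any moment. The crucial fact making this log-space is that the entire normalized pair is described by two small numbers $i,j$ together with the read-only input itself.

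First I would bound the normalization parameters. Writing $q_m=\Q(uv^m)$, the sequence $q_0,q_1,q_2,\ldots$ is the deterministic trajectory generated by the map $q\mapsto\delta(q,v)$ on the $n$-element state set of $\Q$, so it is eventually periodic with pre-period $i$ and period $j$ satisfying $i+j\le n$ (the $i$ tail states and $j$ cycle states are all distinct). Hence $i\le n-1$ and $j\le n$, and both fit in $O(\log n)$ bits, so $x=uv^i$ and $y=v^j$ are fully specified by $i$ and $j$. Next I would compute $(i,j)$ by the standard recompute-on-demand technique. The input $(u,v)$ lies on the read-only input tape, so the algorithm may re-read $u$ and re-read $v$ arbitrarily often without charging work space; it need only maintain a counter for how many copies of $v$ have been consumed, never exceeding $2n$, hence $O(\log n)$. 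To get the smallest $i$ and then the smallest $j$, I iterate $i$ upward from $0$ and, for each $i$, iterate $j$ from $1$ to $n$; for each pair I recompute $q_i=\Q(uv^i)$ and $q_{i+j}=\Q(uv^{i+j})$ afresh by running $\Q$ from $\initstate$, and test equality. Storing $q_i$ and the running state costs $O(\log n)$ each, and the two loop counters cost $O(\log n)$; the first pair with $q_i=q_{i+j}$ is exactly the normalization, since a state of $\Q$ satisfies $q_i=q_{i+j}$ for some $j\ge1$ precisely when it lies on the cycle.

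Finally, with $i$ and $j$ in hand I would recompute $q=\Q(x)=q_i$ (again $O(\log n)$) and simulate $\P_q$ on $y=v^j$: starting at $\initstate_q$, I apply $\delta_q$ while reading $v$ repeatedly, holding a counter up to $j\le n$ for the number of passes and the current state of $\P_q$, which takes $O(\log k)$. I accept iff the final state lies in $F_q$, which settles membership by the definition of acceptance. Since looking up a transition of $\Q$ or of $\P_q$, and testing membership in $F_q$, can each be carried out by scanning the read-only description of $\F$ while holding only the relevant state and letter indices, the total work space is $O(\log n+\log k)$.

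The subtle point worth flagging is the normalization step: the naive way to detect the repetition would record the whole prefix $q_0,q_1,\ldots$, costing linear space. The resolution is that no history need be stored — because $\Q$ is deterministic and complete, every $q_m$ can be regenerated on demand from the input, so it suffices to compare two states at a time, trading time for space. Everything else reduces to routine deterministic traversal of the given automata.
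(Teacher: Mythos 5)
Your proof is correct and takes essentially the same approach as the paper's: compute the normalization $(x,y)=(uv^i,v^j)$ by traversing $\Q$ through at most $n$ repetitions of $v$, keep only the $O(\log n)$-bit counters $i,j$ and the current states in work memory (re-reading the read-only input as needed), and then deterministically run $\P_{\Q(x)}$ on $v^j$ using $O(\log k)$ space for its state. You merely make explicit the bound $i+j\le n$ and the recompute-on-demand bookkeeping that the paper leaves implicit in its appeal to standard deterministic automaton traversal.
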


The next questions to consider are those of \emph{emptiness} and \emph{universality}, namely given an \fdfa\ $\F=(\Q, \PP)$, whether $\Lang{\F} = \emptyset$, and whether $\Lang{\F} = \Sigmasp$, respectively. Notice that the universality problem is equivalent to the emptiness problem over the complement of $\F$. For nondeterministic automata, the complement automaton might be exponentially larger than the original one, making the universality problem much harder than the emptiness problem. Luckily, \fdfa\ complementation is done in constant space, as is the case with deterministic automata, making the emptiness and universality problems equally easy.

The emptiness problem for an \fdfa\ $(\Q,\PP)$ cannot be resolved by only checking whether there is a nonempty progress automaton in $\PP$, since it might be that the accepted period $v$ is not part of any normalized pair. Yet, the existence of a prefix-word $x$ and a period-word $y$, such that $\Q(x)=\Q(xy)$ and $\P_{\Q(x)}$ accepts $y$ is a sufficient and necessary criterion for the nonemptiness of $\F$.
This can be tested in NLOGSPACE. Hardness in NLOGSPACE follows by a reduction from graph reachability \cite{Jon75}.

\begin{thm}\label{thm:non-emptiness}
Emptiness and universality for \fdfa s are NLOGSPACE-complete.
\end{thm}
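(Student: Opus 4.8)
The plan is to turn the informal nonemptiness criterion stated just above the theorem into a reachability question on a product graph, place the resulting search in NLOGSPACE, and then obtain the matching lower bound by reducing directed $s$--$t$ reachability. Universality will be handled by complementation.

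First I would pin down the criterion precisely: $\Lang{\F}\neq\emptyset$ iff there are $x\in\Sigma^*$ and $y\in\Sigma^+$ with $\Q(x)=\Q(xy)$ and $\P_{\Q(x)}(y)\in F_{\Q(x)}$. For the ``only if'' direction, take any accepted $(u,v)$ and let $(x,y)$ be its normalization; by definition of normalization $\Q(x)=\Q(xy)$, and by definition of acceptance $\P_{\Q(x)}(y)$ is accepting. For the ``if'' direction I must check that such a pair $(x,y)$ is genuinely accepted, which is where the normalization subtlety enters: since $\Q(x)=\Q(xy)$, the smallest $(i,j)$ with $\Q(xy^i)=\Q(xy^{i+j})$ is $(0,1)$, so $(x,y)$ is its own normalization and is therefore accepted exactly when $\P_{\Q(x)}(y)$ is accepting. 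This equivalence is the conceptual core of the argument; everything afterwards is routine.

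Next I would implement the search in logarithmic space. Both $\Q$ and each $\P_q$ are deterministic, so a candidate period $y$ simultaneously drives a run of $\Q$ from $q$ and a run of $\P_q$ from $\initstate_q$; the closing-loop condition is thus exactly reachability in the product graph $\Q\Times\P_q$ from $(q,\initstate_q)$ to some $(q,f)$ with $f\in F_q$, along a path of length at least one. The nondeterministic procedure guesses a state $q$, verifies that $q$ is reachable from $\initstate$ in $\Q$, and then verifies the product reachability; at each moment it stores only a constant number of states of $\Q$ and $\P_q$, each costing $O(\log n+\log k)$ bits, together with step counters bounded by $n$ and $nk$. Since directed reachability is in NLOGSPACE, the whole test is in NLOGSPACE.

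For hardness I would reduce directed $s$--$t$ reachability. Given $G=(V,E)$ with source $s$ and target $t$, let the alphabet contain one letter per edge plus a fresh letter $\#$; build $\Q$ with state set $V\cup\{\bot\}$, initial state $s$, the edge letter of $(p,p')$ sending $p$ to $p'$ and every other state to the sink $\bot$, and $\#$ self-looping at $t$ and sending everything else to $\bot$. Let $\P_t$ accept $\#$ and let every other progress automaton reject all words. Then the only possible accepting closing loop sits at $t$ (witnessed by $y=\#$), so $\F$ is nonempty iff $t$ is reachable from $s$ in $\Q$, i.e.\ iff $t$ is reachable from $s$ in $G$; the construction is clearly computable in logarithmic space. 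Finally, for universality I would use Theorem~\ref{thm:fdfa-complementation}: $\F$ is universal iff $\F^c$ is empty, and $\F^c$ is obtainable in constant space, so universality lies in NLOGSPACE and, via the same constant-space map $\F\mapsto\F^c$ read in the other direction, is NLOGSPACE-hard as well. The only genuinely delicate point is the equivalence in the first step; I expect the reachability and closure arguments to be entirely standard.
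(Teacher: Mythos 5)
Your proposal is correct and follows essentially the same route as the paper's proof: the same self-normalization criterion (which you justify directly from the minimality of $(i,j)=(0,1)$ rather than by invoking Lemma~\ref{lem:normalization}, as the paper does), the same guess-and-traverse NLOGSPACE procedure, an NLOGSPACE lower bound by reduction from directed reachability, and universality handled via the constant-space complementation of Theorem~\ref{thm:fdfa-complementation}. The only cosmetic difference is in the hardness reduction, where the paper takes a single-state leading automaton and puts the graph inside a progress \dfa, whereas you encode the graph in the leading automaton using edge letters and a marker letter $\#$ with an accepting loop at $t$ --- both are valid logspace reductions.
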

\begin{proof}
An \fdfa\ $\F=(\Q,\PP)$ is not empty iff there exists a pair $(u,v)\in\Sigmasp$, whose normalization is some pair $(x,y)$, such that $\P_{\Q(x)}$ accepts $y$. By Lemma~\ref{lem:normalization}, a normalized pair is a normalization of itself, implying that a sufficient and necessary criterion for the nonemptiness of $\F$ is the existence of a pair $(x,y)$, such that  $\Q(x)=\Q(xy)$ and $\P_{\Q(x)}$ accepts $y$.

We can nondeterministically find such a pair $(x,y)$ in logarithmic space by guessing $x$ and $y$ (a single letter at each step), and traversing along $\Q$ and $\P_{\Q(x)}$ \cite{GJ79}.

Hardness in NLOGSPACE follows by a reduction from graph reachability \cite{Jon75}, taking an \fdfa\ whose leading automaton has a single state.

As \fdfa\ complementation is done in constant space (Theorem~\ref{thm:fdfa-complementation}), the universality problem has the same space complexity.
\end{proof}

The last decision questions we handle are those of \emph{equality} and \emph{containment}, namely given saturated \fdfa s $\F$ and $\F'$, whether $\Lang{\F}=\Lang{\F'}$ and whether $\Lang{\F}\subseteq\Lang{\F'}$, respectively. Equality reduces to containment, as $\Lang{\F}=\Lang{\F'}$  iff $\Lang{\F}\subseteq\Lang{\F'}$ and $\Lang{\F'}\subseteq\Lang{\F}$. Containment can be resolved by intersection, complementation, and emptiness check, as $\Lang{\F}\subseteq\Lang{\F'}$ iff $\Lang{\F} \cap \Lang{\F'}^c = \emptyset$. Hence, by Theorems \ref{thm:fdfa-complementation}, \ref{thm:fdfa-union-and-intersection}, and \ref{thm:non-emptiness}, these problems are NLOGSPACE-complete. Note that NLOGSPACE hardness immediately follows by reduction from the emptiness problem, which asks whether $\Lang{\F}=\emptyset$.
The complexity for \emph{equality} and \emph{containment} is easily derived from that of emptiness, intersection and complementation.

\begin{prop}\label{prop:containment}
Equality and containment for saturated \fdfa s are NLOGSPACE-complete.
\end{prop}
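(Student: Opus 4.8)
The plan is to reduce \emph{equality} and \emph{containment} to the decision procedures already established for saturated \fdfa s, namely complementation (Theorem~\ref{thm:fdfa-complementation}), intersection (Theorem~\ref{thm:fdfa-union-and-intersection}), and emptiness (Theorem~\ref{thm:non-emptiness}). First I would observe that equality reduces to containment in both directions, since $\Lang{\F}=\Lang{\F'}$ holds iff $\Lang{\F}\subseteq\Lang{\F'}$ and $\Lang{\F'}\subseteq\Lang{\F}$. So it suffices to place containment in NLOGSPACE and to establish the matching lower bound.

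For the upper bound, I would use the identity $\Lang{\F}\subseteq\Lang{\F'}$ iff $\Lang{\F}\cap\Lang{\F'}^c=\emptyset$. The key point is that each of the three ingredients stays within the resource bounds required for log-space composition. By Theorem~\ref{thm:fdfa-complementation}, the complement $\F'^c$ is obtained in constant space on the same structure as $\F'$, with $|\F'^c|=|\F'|$ and $\F'^c$ saturated since $\F'$ is; by Theorem~\ref{thm:fdfa-union-and-intersection}, the intersection $\H=(\Q_1\Times\Q_2,\PP_1\otimes\PP_2)$ with $\Lang{\H}=\Lang{\F}\cap\Lang{\F'^c}$ is a saturated \fdfa\ generated in logarithmic space; and by Theorem~\ref{thm:non-emptiness}, emptiness of $\H$ is decidable in NLOGSPACE. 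The only subtlety is that I must not materialize the intermediate \fdfa s explicitly, as this would cost more than logarithmic space; instead I would compose the constructions on the fly, computing each needed state or transition of $\H$ from the states and transitions of $\F$ and $\F'$ as the emptiness check requests them. Since the complement and product constructions are purely local --- a transition of $\H$ is determined componentwise from transitions of $\F$ and $\F'$, and acceptance of a product state is determined from the acceptance status in each component (with $F'^c$'s accepting set being the set-theoretic complement) --- this composition is log-space computable, and the whole pipeline remains in NLOGSPACE by closure of NLOGSPACE under log-space reductions and composition.

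For the lower bound, NLOGSPACE-hardness of both problems follows immediately by reduction from emptiness, which is NLOGSPACE-complete by Theorem~\ref{thm:non-emptiness}: the question of whether $\Lang{\F}=\emptyset$ is precisely the containment question $\Lang{\F}\subseteq\Lang{\F''}$ where $\F''$ is a fixed trivial \fdfa\ recognizing the empty language (e.g.\ one whose single progress automaton has no accepting states), and equally it is the equality question $\Lang{\F}=\Lang{\F''}$. Thus both problems are NLOGSPACE-complete.

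The main obstacle I anticipate is not any individual step but the \emph{careful accounting of the log-space composition}: one must verify that chaining complementation, intersection, and the emptiness search never requires storing a full intermediate automaton, only a constant number of component-state pointers each of logarithmic size. This is exactly the standard argument that NLOGSPACE is closed under composition with log-space transducers, so I expect it to go through routinely once the locality of the complement and product constructions is made explicit; the remaining correctness content is entirely inherited from Theorems~\ref{thm:fdfa-complementation}, \ref{thm:fdfa-union-and-intersection}, and~\ref{thm:non-emptiness}, so no new language-theoretic reasoning should be needed.
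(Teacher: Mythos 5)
Your proposal is correct and follows essentially the same route as the paper: equality via two-sided containment, containment via $\Lang{\F}\cap\Lang{\F'}^c=\emptyset$ using Theorems~\ref{thm:fdfa-complementation}, \ref{thm:fdfa-union-and-intersection}, and~\ref{thm:non-emptiness}, and hardness by reduction from emptiness. Your explicit on-the-fly composition argument merely spells out what the paper leaves implicit, so no gap remains.
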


\subsubsection*{Saturation check}
All of the operations and decision problems above assumed that the given \fdfa s are saturated. This is indeed the case when learning \fdfa s via the algorithm of \cite{AngluinF14}, and when translating $\omega$-automata to \fdfa s (see Section~\ref{sec:Translations}). We show below that the decision problem of whether an arbitrary \fdfa\ is saturated is in PSPACE. We leave the question of whether it is PSPACE-complete open.

\begin{thm}\label{thm:SaturationIsInPspace}
The problem of deciding whether a given \fdfa\ is saturated is in PSPACE.
\end{thm}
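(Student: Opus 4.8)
The plan is to characterize saturation as a property that can be violated only in a "localizable" way, and then use the fact that non-saturation can be witnessed by a short certificate checkable in polynomial space. Recall that $\F=(\Q,\PP)$ is saturated iff for all pairs $(u,v)$ and $(u',v')$ with $uv^\omega=u'v'^\omega$, membership in $\Lang{\F}$ agrees. The key reduction is that it suffices to check agreement not over arbitrary equivalent pairs, but over the elementary transformations that generate the equivalence relation $uv^\omega=u'v'^\omega$. Specifically, I would recall (this is folklore, cf.\ the normalization machinery already used in Lemma~\ref{lem:normalization}) that two pairs represent the same ultimately periodic word iff one can be obtained from the other by a finite sequence of two basic moves: (a) replacing $(u,v)$ by $(uv,v)$ (rolling one copy of the period into the prefix), and (b) replacing $(u,v)$ by $(u,vv)$ or more generally $(u,v^m)$ (duplicating the period). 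Consequently, $\F$ is saturated iff it is closed under these generating moves, i.e.\ iff $(u,v)\in\Lang{\F}\Leftrightarrow(uv,v)\in\Lang{\F}$ and $(u,v)\in\Lang{\F}\Leftrightarrow(u,v^m)\in\Lang{\F}$ for all $u,v$ and all $m\ge 1$.

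\textbf{Reduction to a language-containment check.} The plan is then to turn each closure condition into an $\omega$-automaton emptiness/containment question. For each of the two generating moves I would build the "good" set of pairs and the "bad" set of pairs, where a bad pair is one on which $\F$ and its transformed version disagree. Since membership of $(u,v)$ in $\Lang{\F}$ is determined by first normalizing w.r.t.\ $\Q$ and then running the appropriate progress \dfa, the set of pairs accepted by $\F$ is itself recognizable: one can construct, from $\F$, a \dfa\ over the alphabet $\Sigma\times\Sigma$ (or over $\Sigma$ with a separator $\dollar$) whose size is polynomial in $|\F|$ and which accepts exactly the encodings of pairs in $\Lang{\F}$ --- this uses only the normalization procedure, which by Proposition~\ref{prop:Membership} needs only logarithmic workspace, hence a polynomial-size automaton. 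Likewise one builds a \dfa\ recognizing the set of pairs $(u,v)$ for which the transformed pair (under move (a) or (b)) lies in $\Lang{\F}$; this is again a polynomial-size construction, since the transformation on the input is a simple regular manipulation. Saturation then fails exactly when the symmetric difference of these two regular sets of pair-encodings is nonempty, which is an emptiness check on a polynomial-size \nfa/\dfa.

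\textbf{Handling the unbounded period-duplication.} The main obstacle is move (b): it allows duplicating the period an \emph{arbitrary} number $m$ of times, so naively one cannot bound the witness length. The plan here is to observe that, after normalization w.r.t.\ $\Q$, duplicating the period only matters up to the least common cycle structure of $\Q$ and of the relevant progress \dfa s; by the normalization lemma (Lemma~\ref{lem:normalization}) together with Fine and Wilf's theorem (already invoked in the \fdfa\ definition), it suffices to consider $m$ bounded by a polynomial in the number of states --- essentially one must only test period multiplicities up to the product of the cycle lengths in $\Q$ and in the progress automata, which is exponential as a number but has polynomially many bits and, more importantly, yields automata of polynomial size. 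Thus each disagreement condition becomes emptiness of an $\omega$-regular/regular language represented by an automaton of size polynomial in $|\F|$ but possibly with an exponential-length shortest witness; since emptiness of such automata is decidable in space polynomial in the automaton's description, and the descriptions are polynomial, the whole saturation check runs in PSPACE.

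\textbf{Assembling the bound.} Finally I would combine the pieces: build the polynomially-many polynomial-size automata encoding the "$\F$ accepts the original pair" and "$\F$ accepts the transformed pair" conditions for moves (a) and (b), take their symmetric differences via the product/complementation constructions (complementation of the involved \dfa s is cheap, and products are polynomial), and perform nonemptiness checks. Saturation holds iff all these symmetric differences are empty. Each check is a reachability question on a product automaton whose size is polynomial in $|\F|$, so it lies in NLOGSPACE relative to that automaton, and the automaton itself is constructible in polynomial space; the overall procedure is therefore in PSPACE, which is exactly the claimed bound. The delicate point, worth flagging, is arguing rigorously that the two generating moves indeed generate the full equivalence $uv^\omega=u'v'^\omega$ and that bounded period-multiplicities suffice --- that is where Fine and Wilf's theorem does the real work.
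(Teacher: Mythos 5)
There are two genuine gaps here, and each by itself breaks the argument. First, your claimed generating set for the equivalence $uv^\omega=u'v'^\omega$ is incomplete: the moves $(u,v)\mapsto(uv,v)$ and $(u,v)\mapsto(u,v^m)$ do \emph{not} generate it. From $(u,v)$ these moves reach exactly the pairs $(uv^j,v^m)$, and their symmetric--transitive closure identifies two pairs iff they have the same canonical form obtained by replacing the period with its primitive root and stripping trailing copies of it from the prefix. Thus $(\epsilon,ab)$ and $(a,ba)$ both denote $(ab)^\omega$ but have distinct canonical forms ($(\epsilon,ab)$ versus $(a,ba)$), so they are \emph{not} related by your moves; an \fdfa\ could be closed under (a) and (b) yet accept one and reject the other, and your algorithm would wrongly certify it saturated. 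The missing generator is cyclic rotation, $(u,aw)\sim(ua,wa)$, which also subsumes your move (a); this is precisely what the paper bakes into its witness shape --- an unsaturated \fdfa\ is shown to disagree on a pair of the form $(u,(v'v'')^l)$ versus $(uv',(v''v')^r)$, i.e.\ a rotation of the period along the cycle of $\Q$ combined with powers --- and what Lemma~\ref{lem:normalization} accommodates via the split $y=y'y''$. This gap is fixable by adding rotation, but as written the reduction is unsound.

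Second, the claim that one can build a polynomial-size \dfa\ over $u\dollar v$ encodings accepting exactly $\Lang{\F}$ is false, and the paper itself refutes it: such a \dfa\ is exactly an $\ldollar$-automaton for the language, and Section~3 (citing \cite{AngluinF14}) notes that saturated \fdfa s can be \emph{exponentially} more succinct than $\ldollar$-automata. The inference ``membership is decidable in logspace, hence there is a polynomial-size automaton'' is invalid: the algorithm behind Proposition~\ref{prop:Membership} stores counters $i,j$ and effectively re-reads $v$ up to $|Q|$ times, whereas a one-pass \dfa\ must, at the $\dollar$, already carry the transformation $\chi^{\Q}_v:Q\to Q$ (plus progress-automaton data) to perform normalization --- which costs on the order of $n^nk^{2k}$ states, the transformation-monoid bound the paper actually uses. (Your treatment of the period power $m$ has the same tension: ``bounded by a polynomial \ldots\ which is exponential as a number'' cannot both hold, and it is the exponential bound that is correct.) Once these sizes are corrected, your ``NL-emptiness relative to a constructible automaton'' step must become: guess a witness letter by letter while storing only the current transformation triple, giving non-saturation in NPSPACE, hence saturation in PSPACE by Savitch and Immerman--Szelepcs\'enyi --- which is exactly the paper's proof. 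So your architecture (short certificate plus on-the-fly product) is the right one, but it only goes through after repairing both the generator set and the size analysis.
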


\begin{proof}
Let $\F=(\Q,\PP)$ be an \fdfa\ of size $(n,k)$. We first show that if $\F$ is unsaturated then there exist words $u,v',v''$ such that $|u|\leq n$ and $|v'|,|v''|\leq n^nk^{2k}$, and non-negative integers $l,r\leq k$ such that $( u, (v'v'')^l )\in \Lang{\F}$ while $(uv',(v''v')^r)\notin \Lang{\F}$.	

If $\F$ is unsaturated then there exists some ultimately periodic word $w\in\Sigma^\omega$ that has two different decompositions to prefix and periodic words on which $\F$ provides different answers.
Let $\P$ and $\P'$ be the respective progress automata, corresponding to some states $q$ and $q'$ of $\Q$.
Let the run of $\Q$ on $w$ be $q_0,q_1,q_2,\ldots$. Since $w$ is ultimately periodic, there exist $i,j\in \NN$ such that $q_{h+j}=q_h$ for all $h>i$. That is, eventually the run cycles through a certain sequence of states. Then $q$ and $q'$ must be on the cycle where $w$ settles. Let $v'$ and $v''$ be the subwords of $w$ that are read on the part of the shortest such cycle from $q$ to $q'$ and from $q'$ back to $q$, respectively. (In case that $q=q'$, we let $v'=\epsilon$ and $v''$ be the whole cycle.) Then the different decompositions are of the form $(u,(v'v'')^l)$ and $(uv',(v''v')^r)$ where $u$ is a string that takes $Q$ to $q$. Let $l$ and $r$ be the shortest such, then since $\P$ and $\P'$ have at most $k$ states, we can assume $l,r\leq k$. We can also assume $u$ is a shortest such string and thus $|u|\leq n$. 

For a \dfa\ $\A=\la \Sigma, Q, \initstate, \delta, F\ra$ and a word $v\in\Sigma^*$, we use $\chi^{\A}_v$ to denote the function from $Q$ to $Q$ defined as $\chi^{\A}_v(q)=\delta(q,v)$. Note that given $|Q|=n$ there are at most $n^n$ different such functions.
Let $\X=\{\chi_v \ST  \chi_v = \la \chi^\Q_v, \chi^{\P}_v, \chi^{\P'}_v \ra, \ v\in\Sigma^*\}$. Then $\X$ is the set of congruence classes of the relation $v_1 \congr{\X} v_2$ iff $\chi^{\Q}_{v_1} = \chi^{\Q}_{v_2}$, $\chi^{\P}_{v_1} = \chi^{\P}_{v_2}$, and $\chi^{\P'}_{v_1} = \chi^{\P'}_{v_2}$.  We can build an automaton such that each state corresponds to a class in $\X$, the initial state is $\chi_\epsilon$, and the transition relation is $\delta_\X(\chi_w,\sigma)=\chi_{w\sigma}$. The cardinality of $\X$ is at most $n^n{k}^{2k}$. Thus, every state has a representative word of length at most $n^n{k}^{2k}$ taking the initial state to that state. Therefore, there exist words $y',y''$ such that $y'\congr{\X} v'$ and $y''\congr{\X} v''$ and $|y'|,|y''| \leq n^n{k}^{2k}$. Thus $u,y',y''$ and $l,r$ satisfy the promised bounds.

Now, to see that \fdfa\ saturation is in PSPACE note we can construct an algorithm that guesses integers $l,r \leq k$ and words $u,v',v''$ such that $|u| \leq n$ and $|v'|,|v''| \leq n^nk^{2k}$. It guesses the words letter by letter and constructs on the way  $\chi_{v'v''}$  and $\chi_{v''v'}$. It also constructs along the way the states $q'$ and $q$ such that $q=\delta(u)$ and $q'=\delta(uv')$. It then computes the $l$ and $r$ powers of $\chi_{v'v''}$ and $\chi_{v'',v'}$, respectively.
	Finally, it checks whether one of $(\chi^\P_{v'v''})^l(q)$ and $(\chi^{\P'}_{v''v'})^r(q')$ is accepting and the other is not, and if so returns that $\F$ is unsaturated.
	The required space is $O(nk^2 \log nk^2)$. This shows that saturation is in coNPSPACE, and by Savitch's and Immerman-Szelepcs\'{e}nyi's theorems, in PSPACE.
\end{proof}

\section{Translating To and From $\omega$-Automata}\label{sec:Translations}
As two $\omega$-regular languages are equivalent iff they agree on the set of ultimately periodic words \cite{McNaughton66}, an $\omega$-regular language can be characterized by a language of pairs of finite words, and in particular by a saturated \fdfa. We shall write $L\Characterizes L'$ to denote that a language $L\subseteq\Sigmasp$ characterizes an $\omega$-regular language $L'$. Formally:

\begin{defi}
A language $L\subseteq\Sigmasp$ \emph{characterizes} an $\omega$-regular language $L'\subseteq\Sigma^\omega$, denoted by $L\Characterizes L'$, if for every pair $(u,v)\in L$, we have $uv^\omega\in L'$, and for every ultimately periodic word $uv^\omega\in L'$, we have $(u,v)\in L$.
\end{defi}

The families of \dfa s defined in \cite{Klarlund94}, as well as the analogous families of right congruences of \cite{MalerStaiger97}, are known to characterize exactly the set of $\omega$-regular languages \cite{Klarlund94,MalerStaiger97}. This is also the case with our definition of saturated \fdfa s.
\begin{thm}\label{thm:Characterize}
Every saturated \fdfa\ characterizes an $\omega$-regular language, and for every $\omega$-regular language, there is a saturated \fdfa\ characterizing it.
\end{thm}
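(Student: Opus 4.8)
I would prove the two implications by independent constructions.

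\emph{From $\omega$-regular languages to saturated \fdfa s.} Given an $\omega$-regular language $L'$, I would fix a \dpa\ $\D$ recognizing it (every $\omega$-regular language admits one, as noted in the preliminaries) and read an \fdfa\ off it. Let the leading automaton $\Q$ be the transition structure of $\D$ with the parity condition erased, so that $\Q(x)$ is exactly the state $\D$ reaches on $x$. For each state $q$, let the progress automaton $\P_q$ run $\D$ from $q$ while recording the least color seen so far, and declare accepting those states in which $\D$ has returned to $q$ and the recorded least color is odd. For a normalized pair $(x,y)$ we have $\Q(x)=\Q(xy)=q$, so the run of $\D$ on $xy^\omega$ ultimately cycles on the $y$-loop from $q$ and the least color seen infinitely often equals the least color along that loop; hence $\P_q$ accepts $y$ iff $xy^\omega\in L'$. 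This yields an \fdfa\ of size $(n,kn)$ (state $\times$ color), and it is saturated because acceptance of any pair is decided through its normalization and therefore depends only on the $\omega$-word $uv^\omega$. (Alternatively, this direction can be obtained from the families of right congruences of \cite{MalerStaiger97}.)

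\emph{From saturated \fdfa s to $\omega$-regular languages.} Let $\F=(\Q,\PP)$ be saturated. By saturation the set $\{uv^\omega : (u,v)\in\Lang{\F}\}$ of accepted ultimately periodic words is well defined, and I must exhibit an $\omega$-regular language whose ultimately periodic words are exactly this set; by McNaughton's theorem \cite{McNaughton66} such a language is then unique and is characterized by $\F$. I would stress first why the tempting construction $\bigcup_q \{x:\Q(x)=q\}\cdot(C_q)^\omega$, with $C_q$ the periods that loop at $q$ and are accepted by $\P_q$, is \emph{wrong}: a concatenation of individually accepted blocks need not be accepted as a single period, so it over-accepts ultimately periodic words (for the saturated \fdfa\ with one leading state whose progress automaton accepts $a^+\cup b^+$ it would wrongly accept $(ab)^\omega=a\,b\,a\,b\cdots$, which $\F$ rejects). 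This is the crux of the proof.

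Instead I would route through the $\dollar$-encoding and let the classical correspondence handle the period. Consider the language $\ldollar=\{u\,\dollar\,v : (u,v)\in\Lang{\F}\}$ over $\Sigma\cup\{\dollar\}$ and show it is \emph{regular}. A deterministic automaton reads $u$ while tracking $\Q(u)$; on $\dollar$ it records $q_0=\Q(u)$ and then, while reading $v$, maintains the transition profiles $\chi^\Q_v$ and $\chi^{\P_p}_v$ (the state-transformation $r\mapsto\delta(r,v)$ induced by $v$ in $\Q$ and in each progress automaton $\P_p$), each updated letter by letter. At the end of input these finitely many data determine the normalization: the indices $i,j$ are the least numbers with $(\chi^\Q_v)^i(q_0)=(\chi^\Q_v)^{i+j}(q_0)$, the landing state is $q^\ast=\Q(x)=(\chi^\Q_v)^i(q_0)$, and the verdict is $\P_{q^\ast}(y)=(\chi^{\P_{q^\ast}}_v)^j(\initstate_{q^\ast})\in F_{q^\ast}$. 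Declaring exactly the end-states satisfying this condition accepting gives a finite deterministic automaton, so $\ldollar$ is regular, and it is saturated since $\F$ is. By the correspondence between regular saturated $\dollar$-languages and $\omega$-regular languages \cite{CalbrixNP93}, there is an $\omega$-regular $L'$ with $\ldollar=\{u\,\dollar\,v:uv^\omega\in L'\}$, i.e.\ $\Lang{\F}\Characterizes L'$, completing the proof. The step I expect to be the main obstacle is precisely making the forward direction agree with $\F$ on \emph{all} ultimately periodic words rather than over-accepting; the profile/$\dollar$-language detour sidesteps the fact that ``being an accepted period'' is not preserved under concatenation, reducing everything to the regularity of a finite-profile automaton and the known $\dollar$-language correspondence.
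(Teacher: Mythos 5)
Your proposal is correct; the first direction coincides with the paper's, while the second takes a genuinely different route. For the direction from an $\omega$-regular language to a saturated \fdfa, your construction is essentially the paper's Theorem~\ref{thm:dpa-to-fdfa}: the leading automaton is the \dpa\ stripped of its acceptance condition, and each progress automaton tracks the least color seen; your extra requirement that the progress run return to $q$ is harmless, since \fdfa\ acceptance is only ever evaluated on normalized pairs, where $\delta(q,y)=q$ holds automatically. (One caution on phrasing: saturation does \emph{not} follow merely from ``acceptance is decided through normalization''---the unsaturated \fdfa\ of Figure~\ref{fig:UnsaturatedFdfa} also decides via normalization; what you actually need, and do establish, is that on normalized pairs the verdict equals membership of $xy^\omega$ in $L'$, which depends only on the $\omega$-word.) For the converse direction, the paper proves Theorem~\ref{thm:fdfa-to-nba} directly: it builds from $\F$ an \nba\ for $L=\bigcup M_q\con N_{q,f}^\omega$, where $N_{q,f}$ demands looping at $q$ in $\Q$ \emph{and} at a fixed accepting state $f$ in $\P_q$ (so that $N_{q,f}^+=N_{q,f}$), and proves correctness via Lemma~\ref{lem:MN} together with a pumping argument showing that some power $y^t$ of a normalized period lands in some $N_{q,f'}$. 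You instead prove regularity of $\ldollar=\{u\,\dollar\,v \ST (u,v)\in\Lang{\F}\}$ via a transition-profile \dfa\ and invoke the correspondence of \cite{CalbrixNP93} between saturated regular $\dollar$-languages and $\omega$-regular languages; this is sound---your profile automaton correctly recovers the normalization indices and verdict (modulo the trivial hygiene of rejecting words not of the form $\Sigma^*\dollar\Sigma^+$), and your diagnosis of why the naive $\bigcup_q M_q\con C_q^\omega$ over-accepts is exactly the pitfall that the $N_{q,f}$ definition, with its fixed recurring accepting state, is designed to avoid. The trade-off: your detour cleanly offloads the $M\con N^\omega$ machinery to the cited theorem and fully suffices for the bare characterization statement, but the profile automaton has on the order of $n\cdot n^n\cdot k^{kn}$ states, whereas the paper's direct construction yields an \nba\ of size $O(n^2k^3)$---a bound the paper needs elsewhere, e.g., in the $2^{\Omega(n\log n)}$ lower bound for \nba-to-\fdfa\ translation, which goes through constant-space complementation plus the polynomial \fdfa-to-\nba\ translation. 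Indeed, the paper deliberately avoids the $\ldollar$ detour precisely because saturated \fdfa s can be exponentially more succinct than $\ldollar$-automata.
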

\begin{proof}
The two directions are proved in Theorems \ref{thm:dpa-to-fdfa} and \ref{thm:fdfa-to-nba}, below.
\end{proof}

In this section, we analyze the state blowup involved in translating deterministic and nondeterministic $\omega$-automata into equivalent saturated \fdfa s, and vice versa. For nondeterministic automata, we consider the \buchi\ acceptance condition, since it is the simplest and most commonly used among all acceptance conditions. For deterministic automata, we consider the parity acceptance condition since it is the simplest among all acceptance conditions whose deterministic version is equi-expressible to the $\omega$-regular languages. We also consider deterministic \buchi\ and \cobuchi, for the simple sub-classes they recognize.

\subsection{From $\omega$-Automata to FDFAs}
We show that \dba, \dca, and \dpa\ have polynomial translations to saturated \fdfa s, whereas translation of \nba s to \fdfa s may involve an inevitable exponential blowup.

\subsubsection*{From deterministic $\omega$-automata}
The constructions of a saturated \fdfa\ that characterizes a given \dba, \dca, or \dpa\ $\D$ share the same idea: The leading automaton is equivalent to $\D$, except for ignoring the acceptance condition, and each progress automaton consists of several copies of $\D$, memorizing the acceptance level of the period-word. For a \dba\ or a \dca, two such copies are enough, memorizing whether or not a \buchi\ (\cobuchi) accepting state was visited. For a \dpa\ with $k$ colors, $k$ such copies are required.

We start with the constructions of an \fdfa\ for a given \dba\ or \dca, which are almost the same. (\buchi\ and \cobuchi\ automata are special cases of parity automata, and therefore their translations to an \fdfa, as described in Theorem~\ref{thm:dba-to-fdfa}, are special cases of the translation given in Theorem~\ref{thm:dba-to-fdfa}. Nevertheless, for clarity reasons, we do give their explicit translations below.)

\begin{thm}\label{thm:dba-to-fdfa}
Let $\D$ be a \dba\ or a \dca\ with $n$ states. There exists a saturated \fdfa\ $\F$ of size $(n,2n)$, such that $\Lang{\F}\Characterizes\Lang{\D}$.
\end{thm}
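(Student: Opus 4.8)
The plan is to construct the \fdfa\ $\F=(\Q,\PP)$ directly from the given automaton $\D=\la\Sigma,Q,\initstate,\delta,F\ra$, following the idea sketched just before the statement. First I would define the leading automaton $\Q$ to be $\D$ stripped of its acceptance condition, i.e.\ $\Q=\la\Sigma,Q,\initstate,\delta\ra$; this has exactly $n$ states. For each state $q\in Q$, the progress \dfa\ $\P_q$ will consist of two copies of $\D$: one copy tracking the current state of $\D$ when started from $q$, and a bit recording whether a relevant ``accepting-level'' event has occurred along the period. Concretely, the states of $\P_q$ are pairs $(p,b)$ with $p\in Q$ and $b\in\{0,1\}$, giving $|\P_q|\leq 2n$, so the size bound $(n,2n)$ holds.

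The key design choice is the acceptance condition of $\P_q$, which differs for the \dba\ and \dca\ cases. For a \dba, the accepting pair $(x,y)$ should satisfy $xy^\omega\in\Lang{\D}$, i.e.\ some state of $F$ is visited infinitely often; since $(x,y)$ is normalized so that $\Q(xy^i)$ is the same state $q$ for all $i$, the run of $\D$ on $xy^\omega$ cycles through exactly the states visited while reading $y$ from $q$, repeated forever. Hence $xy^\omega$ is accepted iff reading $y$ from $q$ passes through some state in $F$. So the bit $b$ would latch to $1$ upon visiting $F$, start at $0$ (with the initial state of $\P_q$ being $(q,0)$, unless $q\in F$), and the accepting states of $\P_q$ are those of the form $(p,1)$. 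For the \dca\ case, acceptance requires that $F$ is visited only finitely often, which on the cycle means $F$ is visited \emph{never} while reading $y$ from $q$; thus I would dually latch $b$ to $1$ on visiting $F$ and declare $(p,0)$ to be the accepting states (equivalently, complement the \buchi\ progress automata, consistent with Theorem~\ref{thm:fdfa-complementation}).

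The correctness argument has two parts. For the characterization $\Lang{\F}\Characterizes\Lang{\D}$, I would take an arbitrary pair $(u,v)$, normalize it to $(x,y)$ with respect to $\Q$, and use the fact that $\Q$ and $\D$ share transitions so that the $\D$-run on $xy^\omega$ settles into the cycle read by $y$ from $q=\Q(x)$; the progress-automaton acceptance condition then exactly captures the \buchi\ (resp.\ \cobuchi) condition on this eventual cycle, giving $(u,v)\in\Lang{\F}\iff uv^\omega\in\Lang{\D}$. This needs both implications of the definition of $\Characterizes$: every accepted pair maps to a word in $\Lang{\D}$, and every ultimately periodic word in $\Lang{\D}$ arises from some accepted pair. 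For saturation, I would argue that since the construction gives $(u,v)\in\Lang{\F}\iff uv^\omega\in\Lang{\D}$ for \emph{every} pair (not merely normalized ones), any two pairs with $uv^\omega=u'v'^\omega$ receive the same verdict automatically.

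The main obstacle I expect is not the construction itself but pinning down precisely why, under normalization, the \buchi/\cobuchi\ condition on the \emph{infinite} run reduces to a condition on the single finite segment read by $y$ from $q$. The delicate point is that the normalization guarantees $\Q(xy^i)=q$ for all $i\geq 0$, so the state of $\D$ at each period boundary is the same $q$, and therefore the infinite run is genuinely the periodic repetition of the finite run on $y$ from $q$; only then does ``visits $F$ infinitely often'' collapse to ``visits $F$ at least once in one period,'' and ``visits $F$ finitely often'' to ``visits $F$ zero times in one period.'' Making this reduction airtight, including the boundary subtlety of whether $q\in F$ counts for the first period, is where I would spend the most care.
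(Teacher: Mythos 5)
Your proposal is correct and is essentially the paper's own proof: the same construction (leading automaton $\Q$ is $\D$ without acceptance; each $\P_q$ is two copies of $\D$ with a latch bit starting at $(q,0)$, accepting $Q\Times\{1\}$ for a \dba\ and $Q\Times\{0\}$ for a \dca) and the same correctness argument, reducing the \buchi/\cobuchi\ condition on $xy^\omega$ to one period of $y$ read from $q=\Q(x)$ via normalization, with saturation following for free since the verdict depends only on $uv^\omega$. The boundary subtlety you flag (whether $q\in F$ counts) is resolved in the paper exactly as your latching convention implies: the bit is set upon \emph{entering} an accepting state, and since the normalized period returns to $q$, a final transition into $q\in F$ sets the bit, so starting at $(q,0)$ is always correct.
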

\begin{proof}\

\subproof{Construction}
Let $\D=\la \Sigma, Q, \initstate, \delta, \alpha \ra$ be a \dba\ or a \dca. We define the \fdfa\ $\F=(\Q, \PP)$, where $\Q$ is the same as $\D$ (without acceptance), and each progress automaton $\P_q$ has two copies of $\D$, having $(q,0)$ as its initial state, and moving from the first to the second copy upon visiting a $\D$-accepting state. Formally: 
$\Q=\la \Sigma, Q, \initstate, \delta \ra$, and for each state $q\in Q$, $\PP$ has the \dfa\ $\P_q = \la \Sigma, Q\Times\{0,1\}, (q,0), \delta', F \ra$, where for every $\sigma\in\Sigma, \delta'((q,0),\sigma) = (\delta(q,\sigma),0)$ if $\delta(q,\sigma)\not\in\alpha$ and $(\delta(q,\sigma),1)$ otherwise; and $\delta'((q,1),\sigma) = (\delta(q,\sigma),1)$. The set $F$ of accepting states is $Q\Times\{1\}$ if $\D$ is a \dba\ and $Q\Times\{0\}$ if $\D$ is a \dca.

\subproof{Correctness}
We show the correctness for the case that $\D$ is a \dba. The case that $\D$ is a \dca\ is analogous.

Consider a word $uv^\omega\in\Lang{\D}$, and let $(x,y)$ be the normalization of $(u,v)$ w.r.t.\ $\Q$. Since $xy^\omega=uv^\omega\in\Lang{\D}$, it follows that $\D$ visits an accepting state when running on $y$ from the state $\D(x)$, implying that $\P_{\Q(x)}(y)$ is an accepting state. Hence, $(u,v)\in\Lang{\F}$.

As for the other direction, consider a pair $(u,v)\in\Lang{\F}$, and let $(x,y)$ be the normalization of $(u,v)$ w.r.t.\ $\Q$. Since $\P_{\Q(x)}(y)$ is an accepting state, $\D$ has the same structure as $\Q$, and $\Q(x)=\Q(xy)$, it follows that $\D$ visits an accepting state when running on $y$ from the state $\D(x)$, implying that $xy^\omega=uv^\omega\in\Lang{\D}$.

Note that $\F$ is saturated as a direct consequence of the proof that it characterizes an $\omega$-regular language.
\end{proof}

We continue with the construction of an \fdfa\ for a given \dpa.

\begin{thm}\label{thm:dpa-to-fdfa}
Let $\D$ be a \dpa\ with $n$ states and $k$ colors. There exists a saturated \fdfa\ $\F$ of size $(n,kn)$, such that $\Lang{\F}\Characterizes\Lang{\D}$.
\end{thm}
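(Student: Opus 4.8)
The plan is to generalize the construction from Theorem~\ref{thm:dba-to-fdfa}, which handled the two-color (\dba/\dca) case, to an arbitrary parity condition with $k$ colors. The guiding idea is identical: the leading automaton $\Q$ is simply $\D$ stripped of its acceptance condition, so that $\Q$ reaches the same state as $\D$ on every prefix. Each progress automaton $\P_q$ must then, starting from state $q$, read the period-word $y$ and determine whether the infinite repetition $y^\omega$ (read from $q$) is accepted by the parity condition of $\D$. Since $(x,y)$ is a normalization w.r.t.\ $\Q$, we have $\Q(x)=\Q(xy)$, so $\D$ runs on $y^\omega$ from $\D(x)=q$ along a fixed cycle; the parity condition accepts iff the minimal color appearing on that cycle is odd. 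Thus $\P_q$ only needs to track, over a single pass through $y$, the minimum color seen while reading $y$ from $q$.

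First I would define the progress automaton $\P_q$ to consist of $k$ copies of $\D$ (one per color), where the state $(q',c)$ records that the run has reached $\D$-state $q'$ and the minimal color encountered so far (since leaving $q$) is $c$. Formally, the initial state is $(q,\infty)$ or equivalently $(q,k)$ as a neutral maximal value, and upon reading $\sigma$ from $(q',c)$ the automaton moves to $(\delta(q',\sigma),\min(c,\kappa(\delta(q',\sigma))))$. The accepting set $F$ consists of all states $(q',c)$ where $c$ is odd. This uses $|Q|\Times[1..k]$ states, giving size $(n,kn)$ as claimed. I would then verify that $\P_q(y)$ is accepting iff the minimal color visited by $\D$ on the cycle traversed while reading $y$ from $q$ is odd, which by the parity acceptance condition is exactly the condition for $\D$ to accept $xy^\omega$.

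The correctness argument mirrors the \dba\ case and splits into the two containment directions. For the forward direction, take $uv^\omega\in\Lang{\D}$ with normalization $(x,y)$; since $xy^\omega=uv^\omega$ and $\Q(x)=\Q(xy)$, the run of $\D$ on $xy^\omega$ settles into the cycle read by $y$ from $q=\D(x)$, and the minimal color infinitely often visited is precisely the minimal color on this $y$-cycle. As $\D$ accepts, this minimum is odd, so $\P_q(y)\in F$ and $(u,v)\in\Lang{\F}$. The reverse direction runs symmetrically: if $\P_{\Q(x)}(y)$ is accepting then the minimal color on the $y$-cycle is odd, and because $\Q$ has the same transition structure as $\D$ with $\Q(x)=\Q(xy)$, this cycle is exactly where $\D$'s run on $xy^\omega$ stabilizes, so $\D$ accepts and $uv^\omega\in\Lang{\D}$. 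Saturation then follows for free, as it did in Theorem~\ref{thm:dba-to-fdfa}, directly from the fact that acceptance of $(u,v)$ by $\F$ was characterized purely through $uv^\omega\in\Lang{\D}$, a property invariant under different decompositions of the same ultimately periodic word.

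The main obstacle, and the one point requiring genuine care rather than routine bookkeeping, is justifying that tracking the \emph{minimum color over a single pass of $y$} correctly captures the parity of the minimal color visited \emph{infinitely often} in $xy^\omega$. This identification hinges crucially on the normalization guarantee $\Q(x)=\Q(xy)$: it ensures the run on $y$ forms a genuine cycle in $\D$, so that the colors seen on one traversal of $y$ from $q$ are exactly the colors seen in each period of the ultimate cycle, and hence the set of colors visited infinitely often equals the set of colors on this single $y$-cycle. Without normalization this equivalence would fail, so I would make explicit that Lemma~\ref{lem:normalization} and the normalization definition are what license reducing an infinitary (liminf-of-colors) acceptance test to a finite computation inside the progress \dfa.
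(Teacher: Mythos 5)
Your proposal is correct and follows essentially the same route as the paper: the leading automaton is $\D$ stripped of its acceptance condition, and each progress automaton $\P_q$ consists of $k$ copies of $\D$ tracking the minimal color seen, with transition $(q',c)\mapsto(\delta(q',\sigma),\min(c,\kappa(\delta(q',\sigma))))$ and acceptance when the tracked minimum is odd, exactly as in the paper's construction (whose correctness it likewise reduces to the \dba\ argument of Theorem~\ref{thm:dba-to-fdfa}). The only cosmetic difference is your initial state $(q,k)$ versus the paper's $(q,\kappa(q))$, which is immaterial: since the normalized period $y$ satisfies $\delta(q,y)=q$, the run returns to $q$ and $\kappa(q)$ is folded into the minimum at the final step either way, so both automata agree on all normalized pairs.
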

\begin{proof}\

\subproof{Construction}
Let $\D=\la \Sigma, Q, \initstate, \delta, \kappa \ra$ be a \dpa, where $\kappa:Q \rightarrow [1..k]$. We define the \fdfa\ $\F=(\Q, \PP)$, where $\Q$ is the same as $\D$ (without acceptance), and each progress automaton $\P_q$ has $k$ copies of $\D$, having $(q,\kappa(q))$ as its initial state, and moving to  a $j$-th copy upon visiting a state with color $j$, provided that $j$ is lower than the index of the current copy. The accepting states are those of the odd copies.

Formally: $\Q=\la \Sigma, Q, \initstate, \delta \ra$, and for each state $q\in
Q$, $\PP$ has the \dfa\ $\P_q = \la \Sigma, Q~\Times~[1..k],$ $ (q,\kappa(q)), \delta', F \ra$, where for every $\sigma\in\Sigma$ and $i\in[1..k]$, \[\delta'((q,i),\sigma) = (\delta(q,\sigma),\min(i, \kappa(\delta(q,\sigma)))).\]
The set $F$ of accepting states is $\{Q\Times\{i\} \ST i \mbox{ is odd } \}$.

\subproof{Correctness}
Analogous to the arguments in the proof of Theorem~\ref{thm:dba-to-fdfa}.
\end{proof}

\subsubsection*{From nondeterministic $\omega$-automata}
An \nba\ $\A$ can be translated into a saturated \fdfa\ $\F$, by first determinizing $\A$ into an equivalent \dpa\ $\A'$ \cite{Piterman06,FismanL15} (which might involve a $2^{O(n\log n)}$ state blowup and $O(n)$ colors \cite{SCH09}), and then polynomially translating $\A'$ into an equivalent \fdfa\ (Theorem~\ref{thm:dpa-to-fdfa}).

\begin{prop}\label{prop:buchi-to-recurrent}
Let $\B$ be an \nba\ with $n$ states. There is a saturated \fdfa\ that characterizes $\Lang{\B}$ with a leading automaton and progress automata of at most $2^{O(n\log n)}$ states each.
\end{prop}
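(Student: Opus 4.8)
The plan is to obtain the desired \fdfa\ by composing two translations that are already available: a determinization of $\B$ into a \dpa, followed by the polynomial \dpa-to-\fdfa\ construction of Theorem~\ref{thm:dpa-to-fdfa}. The correctness (that the result is saturated and characterizes the right language) will come for free from Theorem~\ref{thm:dpa-to-fdfa}, so the only real content is the size bookkeeping.

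First I would determinize $\B$ into an equivalent \dpa\ $\B'$ using a standard Safra/Piterman-style construction~\cite{Piterman06,FismanL15}. Invoking the known bounds on that construction~\cite{SCH09}, $\B'$ has at most $2^{O(n\log n)}$ states and $O(n)$ colors, and $\Lang{\B'}=\Lang{\B}$ as $\omega$-languages. Next I would apply Theorem~\ref{thm:dpa-to-fdfa} directly to $\B'$: writing $m$ for its number of states and $k$ for its number of colors, that theorem yields a saturated \fdfa\ $\F$ of size $(m,km)$ with $\Lang{\F}\Characterizes\Lang{\B'}=\Lang{\B}$. Thus $\F$ is saturated and characterizes $\Lang{\B}$, and it remains only to check the size claim.

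The size arithmetic is immediate and is exactly where the stated bound originates. The leading automaton of $\F$ has $m=2^{O(n\log n)}$ states. Each progress automaton has at most $km=O(n)\cdot 2^{O(n\log n)}$ states; since $\log\bigl(O(n)\cdot 2^{O(n\log n)}\bigr)=O(\log n)+O(n\log n)=O(n\log n)$, this count is again $2^{O(n\log n)}$. Hence both the leading automaton and every progress automaton have at most $2^{O(n\log n)}$ states, as required.

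As for difficulty, there is no genuine obstacle here: the statement is a clean composition of an external determinization bound with Theorem~\ref{thm:dpa-to-fdfa}, and neither step needs to be reproved. The only point that warrants a moment's care is confirming that multiplying the $2^{O(n\log n)}$ state count by the polynomial factor $k=O(n)$ introduced by the progress automata stays within the $2^{O(n\log n)}$ class — which the logarithm computation above verifies.
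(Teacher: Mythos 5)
Your proposal is correct and is exactly the paper's argument: determinize $\B$ into a \dpa\ with $2^{O(n\log n)}$ states and $O(n)$ colors via \cite{Piterman06,FismanL15,SCH09}, then apply the polynomial translation of Theorem~\ref{thm:dpa-to-fdfa}. Your explicit check that the factor $k=O(n)$ in the progress-automaton size $km$ is absorbed into $2^{O(n\log n)}$ is the same (routine) bookkeeping the paper leaves implicit.
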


A $2^{O(n\log n)}$ state blowup in this case is inevitable, based on the lower bound for complementing \nba s \cite{Michel88,Yan06,Sch09b}, the constant complementation of \fdfa s, and the polynomial translation of a saturated \fdfa\ to an \nba:

\begin{thm}
There exists a family of \nba s $\B_1, \B_2, \ldots$, such that for every $n\in\NN$, $\B_n$ is of size $n$, while a saturated \fdfa\ that characterizes $\Lang{\B_n}$ must be of size $(m,k)$, such that $\max(m,k) \geq 2^{\Omega(n \log n)}$.
\end{thm}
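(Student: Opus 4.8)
The plan is to exploit the sharp contrast between complementation of \nba s and complementation of \fdfa s that the paper has already isolated. For the witnessing family I would take the ``hard-to-complement'' \nba s of Michel, in the tightened form of \cite{Michel88,Yan06,Sch09b}: for each $n$ there is an \nba\ $\B_n$ of size $n$ whose complement language $\comp{\Lang{\B_n}}$ cannot be recognized by any \nba\ with fewer than $2^{\Omega(n\log n)}$ states. (A linear factor coming from the precise form of Michel's bound is absorbed into the index $n$, and the alphabet may grow with $n$, as the statement places no restriction on it.)

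The core of the argument is a short chain of size-preserving and size-polynomial translations, run by contradiction against a small \fdfa. Suppose $\F_n$ were a saturated \fdfa\ of size $(m,k)$ with $\Lang{\F_n}\Characterizes\Lang{\B_n}$. First I would complement it: Theorem~\ref{thm:fdfa-complementation} yields a saturated \fdfa\ $\F_n^c$ of the \emph{same} size $(m,k)$ recognizing $\Sigmasp\setminus\Lang{\F_n}$. Because $\F_n$ characterizes $\Lang{\B_n}$, for every ultimately periodic word we have $(u,v)\in\Lang{\F_n}$ iff $uv^\omega\in\Lang{\B_n}$; negating both sides shows $(u,v)\in\Lang{\F_n^c}$ iff $uv^\omega\in\comp{\Lang{\B_n}}$, i.e.\ $\Lang{\F_n^c}\Characterizes\comp{\Lang{\B_n}}$.

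Next I would push $\F_n^c$ back into the \nba\ world through the polynomial translation of a saturated \fdfa\ to an \nba\ (Theorem~\ref{thm:fdfa-to-nba}, giving $O(m^2k^3)$ states). This produces an \nba\ recognizing $\comp{\Lang{\B_n}}$ with $O(m^2k^3)=O(\max(m,k)^5)$ states. Comparing this with Michel's lower bound forces $\max(m,k)^5\geq 2^{\Omega(n\log n)}$, and extracting fifth roots yields the claimed $\max(m,k)\geq 2^{\Omega(n\log n)}$.

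The one point that requires care is the bookkeeping inside the exponent: both translations in the chain are polynomial, so each only rescales the hidden constant in $\Omega(n\log n)$, and the super-exponential lower bound therefore survives intact through the fifth root. No new automata-theoretic work is needed beyond citing the tight $2^{\Omega(n\log n)}$ complementation lower bound for \nba s; the whole content is in routing that bound through the constant-cost \fdfa\ complementation and the polynomial \fdfa-to-\nba\ direction, so that a hypothetical subexponential \fdfa\ would contradict the hardness of \nba\ complementation.
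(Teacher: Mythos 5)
Your proposal is correct and takes essentially the same route as the paper's own proof: both start from Michel's family of \nba s whose complements require $2^{\Omega(n\log n)}$ states, complement the hypothetical small saturated \fdfa\ at no size cost (Theorem~\ref{thm:fdfa-complementation}), translate the complemented \fdfa\ to an \nba\ of size $O(m^2k^3)\subseteq O(\max(m,k)^5)$ (Theorem~\ref{thm:fdfa-to-nba}), and derive a contradiction with the complementation lower bound. If anything, you spell out two steps the paper leaves implicit, namely that $\Lang{\F^c}$ characterizes $\comp{\Lang{\B_n}}$ and that taking the fifth root only rescales the hidden constant inside $\Omega(n\log n)$.
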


\begin{proof}
Michel~\cite{Michel88} has shown that there exists a family of languages $\{L_n\}_{n\geq 1}$, such that for every $n$, there exists an \nba\ of size $n$ for $L_n$, but an \nba\ for $L_n^c$, the complement of $L_n$, must have at least $2^{n \log n}$ states. 

Assume, towards a contradiction, that exist $n\in\NN$ and a saturated \fdfa\ $\F$ of size $(m,k)$ that characterizes $L_n$, such that $\max(m,k) < 2^{\Omega(n\log n)}$. Then, by Theorem~\ref{thm:fdfa-complementation}, there is a saturated \fdfa\ $\F^c$ of size $(m,k)$ that characterizes $L_n^c$. Thus, by Theorem~\ref{thm:fdfa-to-nba}, we have an \nba\ of size smaller than $(2^{\Omega(n\log n)})^5=2^{\Omega(n\log n)}$ for $L_n^c$. Contradiction.
\end{proof}

\subsection{From FDFAs to $\omega$-automata}
We show that saturated \fdfa s can be polynomially translated into \nba s, yet translations of saturated \fdfa s to \dpa s may involve an inevitable exponential blowup.

\subsubsection*{To nondeterministic $\omega$-automata}
We show below that every saturated \fdfa\ can be polynomially translated to an equivalent \nba. Since an \nba\ can be viewed as a special case of an \npa, a translation of saturated \fdfa s to \npa s follows. Translating saturated \fdfa s to \nca s is not always possible, as the latter are not expressive enough.

The translation goes along the lines of the construction given in \cite{CalbrixNP93} for translating an $\ldollar$-automaton into an equivalent \nba. We prove below that the construction is correct for saturated \fdfa s, despite the fact that saturated \fdfa s can be exponentially smaller than $\ldollar$-automata. 

We start with a lemma from \cite{CalbrixNP93}, which will serve us for one direction of the proof.

\begin{lem}[\cite{CalbrixNP93}]\label{lem:MN}
Let $M,N\subseteq\Sigma^*$ such that $M\con N^* = M$ and $N^+ = N$. Then for every  ultimately periodic word $w\in\Sigma^\omega$ we have that $w \in M\con N^\omega$ iff there exist words $u\in M$ and $v\in N$ such that $uv^\omega = w$.
\end{lem}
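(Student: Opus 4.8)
The plan is to prove the two directions separately, with the right-to-left implication being immediate and the left-to-right implication carrying all the content. For the easy direction, suppose there exist $u\in M$ and $v\in N$ with $uv^\omega=w$. Since $v\in N$, the infinite concatenation $v^\omega = v\con v\con v\cdots$ lies in $N^\omega$, and since $u\in M$, we get $w=u\con v^\omega\in M\con N^\omega$. This direction uses neither hypothesis on $M$ and $N$.

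For the hard direction, assume $w\in M\con N^\omega$ is ultimately periodic. First I would fix a decomposition $w = m\con n_1\con n_2\con n_3\cdots$ with $m\in M$ and each $n_i\in N$, together with a representation $w = xz^\omega$ where $x\in\Sigma^*$, $z\in\Sigma^+$, and write $d=|z|$. The key objects are the \emph{cut positions} $p_0=|m|$ and $p_i=|m|+|n_1\con\cdots\con n_i|$ for $i\geq 1$. Because $w$ is infinite, these positions are unbounded, so infinitely many of them exceed $|x|$; applying the pigeonhole principle to their residues modulo $d$, I would select indices $a<b$ with $p_a<p_b$, both $p_a,p_b\geq |x|$, and $p_a\equiv p_b\pmod d$.

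Then I set $u = m\con n_1\con\cdots\con n_a$ and $v = n_{a+1}\con\cdots\con n_b$. Here the two hypotheses enter exactly where needed: $u = m\con(n_1\con\cdots\con n_a)\in M\con N^* = M$, and $v\in N^+ = N$ as a concatenation of at least one word from $N$, with $v$ nonempty since $p_b>p_a$. It remains to verify $uv^\omega=w$. By construction $u=w[1..p_a]$ and $v=w[p_a+1..p_b]$. Since $p_a\geq|x|$, the suffix of $w$ from position $p_a+1$ onward lies entirely within the periodic part and equals $s^\omega$, where $s$ is the cyclic rotation of $z$ beginning at position $p_a+1$, so $|s|=d$. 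As $p_b-p_a$ is a positive multiple of $d$, the segment $v$ equals $s^{(p_b-p_a)/d}$, whence $v^\omega=s^\omega$ and therefore $uv^\omega = w[1..p_a]\con s^\omega = w$, as required.

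The main obstacle is the final step: justifying cleanly that two cut positions congruent modulo $d$ and lying beyond the non-periodic prefix $x$ carve out a segment $v$ that is an exact power of a rotation of $z$, so that $v^\omega$ reproduces the periodic tail of $w$. Everything else is bookkeeping, and the algebraic hypotheses $M\con N^*=M$ and $N^+=N$ serve precisely to keep the chosen prefix $u$ inside $M$ and the chosen period $v$ inside $N$.
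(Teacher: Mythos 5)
Your proof is correct, and it is essentially the standard argument: the paper itself states this lemma without proof, citing \cite{CalbrixNP93}, and your pigeonhole argument on the cut positions $p_i$ modulo the period length $d$ (selecting $a<b$ with $p_a<p_b$, $p_a,p_b\geq|x|$, and $p_a\equiv p_b \pmod d$, so that $v=n_{a+1}\cdots n_b$ is an exact power of a rotation of $z$) is precisely the classical proof from that source, with the hypotheses $M\cdot N^*=M$ and $N^+=N$ invoked exactly where they are needed. The one subtlety worth noting --- that $N$ may contain $\epsilon$, so consecutive cut positions can coincide --- is already handled by your requiring $p_a<p_b$ strictly, which is available because the cut positions are unbounded.
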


We continue with the translation and its correctness.

\begin{thm}\label{thm:fdfa-to-nba}
For every saturated \fdfa\ $\F$ of size $(n,k)$, there exists an \nba\ $\B$ with $O(n^2 k^3)$ states, such that $\Lang{\F}\Characterizes\Lang{\B}$.
\end{thm}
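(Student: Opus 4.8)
The plan is to adapt the \cite{CalbrixNP93} construction by splitting the characterized $\omega$-language into a finite union of languages of the form $M\con N^\omega$, each meeting the hypotheses of Lemma~\ref{lem:MN}, and realizing each by a small \nba. Write $\F=(\Q,\PP)$ with $\Q$ over state set $Q$, $|Q|=n$, and each progress \dfa\ $\P_q=(\Sigma,P_q,\initstate_q,\delta_q,F_q)$ of size at most $k$. For each $q\in Q$ put
\[
M_q=\{\,x\in\Sigma^* \ST \Q(x)=q\,\},
\]
and for each $q\in Q$ and each accepting state $s\in F_q$ put
\[
N_{q,s}=\{\,v\in\Sigma^+ \ST \delta(q,v)=q,\ \delta_q(\initstate_q,v)=s,\ \delta_q(s,v)=s\,\}.
\]
The identity I aim to prove is $\Lang{\F}\Characterizes \bigcup_{q\in Q}\bigcup_{s\in F_q} M_q \con N_{q,s}^\omega$; concretely, an ultimately periodic $w=uv^\omega$ lies in the right-hand $\omega$-language iff $(u,v)\in\Lang{\F}$, which (by saturation) is independent of the chosen decomposition.

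First I would check the algebraic hypotheses of Lemma~\ref{lem:MN}. Every $v\in N_{q,s}$ satisfies $\delta(q,v)=q$, so appending such words to any $x$ with $\Q(x)=q$ keeps $\Q$ in $q$, giving $M_q\con N_{q,s}^*=M_q$. For $N_{q,s}^+=N_{q,s}$, take $v_1,v_2\in N_{q,s}$: then $\delta(q,v_1v_2)=q$, and crucially $\delta_q(\initstate_q,v_1v_2)=\delta_q(s,v_2)=s$ and $\delta_q(s,v_1v_2)=\delta_q(s,v_2)=s$, so $v_1v_2\in N_{q,s}$ — the clause $\delta_q(s,\cdot)=s$ is exactly what turns $N_{q,s}$ into a semigroup. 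The easy inclusion follows immediately: if $w=xv^\omega$ with $x\in M_q$ and $v\in N_{q,s}$, then $\Q(x)=\Q(xv)=q$, so $(x,v)$ is its own normalization and $\P_q(v)=\delta_q(\initstate_q,v)=s\in F_q$; hence $(x,v)\in\Lang{\F}$ and $w=xv^\omega$ is characterized.

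The main obstacle is the converse inclusion, and this is where saturation does the work. Starting from an accepted $(u,v)$ with normalization $(x,y)$ and $q=\Q(x)$, I first note that saturation forces $\P_q(y^m)\in F_q$ for every $m\ge 1$: each pair $(x,y^m)$ is self-normalizing (since $\Q(x)=\Q(xy^m)=q$) and $x(y^m)^\omega=xy^\omega$, so all these pairs agree on membership, and $(x,y)$ is accepted. Next, in the finite transition monoid of $\P_q$ I pick a power $N$ for which the transformation induced by $y^N$ is idempotent, and set $Y=y^N$ and $s=\delta_q(\initstate_q,Y)$. Then $\delta_q(s,Y)=\delta_q(\initstate_q,Y^2)=s$ by idempotence, $\delta(q,Y)=q$, and $s=\P_q(Y)\in F_q$ by the previous observation. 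Thus $x\in M_q$, $Y\in N_{q,s}$, and $w=xY^\omega$, so Lemma~\ref{lem:MN} places $w$ in $M_q\con N_{q,s}^\omega$; saturation again guarantees the characterization is well defined across all decompositions, completing the language equality. I expect the saturation-to-idempotent argument to be the genuinely delicate step, exactly the subtlety the surrounding text flags.

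Finally I would bound the size. A \dfa\ for $M_q$ is $\Q$ with $q$ as sole accepting state ($n$ states), and a \dfa\ for $N_{q,s}$ is the product automaton tracking simultaneously the $\Q$-run from $q$, the $\P_q$-run from $\initstate_q$, and the $\P_q$-run from $s$, of size $O(nk^2)$. The standard concatenation-and-$\omega$-iteration construction gives an \nba\ for $M_q\con N_{q,s}^\omega$ with $O(n+nk^2)=O(nk^2)$ states. Taking the nondeterministic union over the $n$ choices of $q$ and the at most $k$ choices of $s\in F_q$ yields a single \nba\ $\B$ with $O(n\con k\con nk^2)=O(n^2k^3)$ states satisfying $\Lang{\F}\Characterizes\Lang{\B}$, as claimed.
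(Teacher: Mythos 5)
Your proposal is correct and follows essentially the same route as the paper: the same decomposition $\bigcup_{q,f} M_q \con N_{q,f}^\omega$, the same verification of the hypotheses of Lemma~\ref{lem:MN} combined with saturation for both directions, and the same $O(n^2k^3)$ size analysis via the three-track product for $N_{q,f}$. The only (harmless) variation is in the hard direction, where you obtain the self-looping accepting state via an idempotent power of $y$ in the transition monoid of $\P_q$ instead of the paper's pigeonhole choice of a power $y^t$ with $t\geq h$ divisible by $r$ --- an equivalent way to land in some $N_{q,f'}$.
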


\pagebreak
\begin{proof}\

\subproof{Construction}
Consider a saturated \fdfa\ $\F = (\Q, \PP)$, where $\Q = \la \Sigma, Q, \initstate, \delta \ra$, and for each state $q\in Q$, $\PP$ has the progress \dfa\ $\P_q = \la \Sigma, P_q, \initstate_q, \delta_q, F_q \ra$.

For every $q\in Q$, let $M_q$ be the language of finite words on which $\Q$ reaches $q$, namely $M_q = \{u \in \Sigma^* \ST \Q(u)=q\}$. 
For every $q\in Q$ and for every accepting state $f\in F_q$, let $N_{q,f}$ be the language of finite words on which $\Q$ makes a self-loop on $q$, $\P_q$ reaches $f$, and $\P_q$ makes a self-loop on $f$, namely $N_{q,f} = \{v \in \Sigma^* \ST (\delta(q,v)=q) \land (\P_q(v)=f) \land (\delta_q(f,v)=f)  \}$.
We define the $\omega$-regular language 
\begin{equation}\label{eq:L}
L = \bigcup_{  \{ (q,f) \ST (q\in Q) \land (f \in F_q)   \}  } M_q \con N_{q,f}^\omega
\end{equation}

One can construct an \nba\ $\B$ that recognizes $L$ and has up to $O(n^2 k^3)$ states:
$L$ is the union of $nk$ sublanguages; $\B$ will have $nk$ corresponding subautomata, and will nondeterministically start in one of them. In each subautomaton, recognizing the language $M_q \con N_{q,f}^\omega$, a component of size $n$ for $M_q$ is obtained by a small modification to $\Q$, in which $q$ can nondeterministically continue with an $\epsilon$-transition\footnote{The $\epsilon$-transitions can be removed from an \nba\ with no state blowup.} to a component realizing $N_{q,f}^\omega$. An \nba\ for the language $N_{q,f}$ consists of the intersection of three \nba s, for the languages $\{v \in \Sigma^* \ST \delta(q,v)=q  \}$, $\{v \in \Sigma^* \ST \P_q(v)=f \}$, and $\{v \in \Sigma^* \ST \delta_q(f,v)=f  \}$, each of which can be obtained by small modifications to either $\Q$ or $\P_q$, resulting in $nk^2$ states. Finally, the automaton for $N_{q,f}^\omega$ is obtained by adding $\epsilon$-transitions in the automaton of $N_{q,f}$ from its accepting states to its initial state. Thus, each subautomaton is of size $n+nk^2$, and $\B$ is of size $nk(n+nk^2)\in O(n^2k^3)$.

\subproof{Correctness}
Consider an ultimately periodic word $uv^\omega \in \Lang{\B}$. By the construction of $\B$, $uv^\omega \in L$, where $L$ is defined by Equation~(\ref{eq:L}). Hence, $uv^\omega \in M_q \con N_{q,f}^\omega$, for some $q\in Q$ and $f\in F_q$. By the definitions of $M_q$ and $N_{q,f}$, we get that $M_q$ and $N_{q,f}$ satisfy the hypothesis of Lemma~\ref{lem:MN}, namely $N_{q,f}^+ = N_{q,f}$ and $M_q \con N_{q,f}^* = M_q$. Therefore, by Lemma~\ref{lem:MN}, there exist finite words $u' \in M_q$ and $v' \in N_{q,f}$ such that $u'v'^\omega = uv^\omega$. From the definitions of $M_q$ and $N_{q,f}$, it follows that the run of $\Q$ on $u'$ ends in the state $q$, and $\P_q$ accepts $v'$. Furthermore, by the definition of $N_{q,f}$, we have $\delta(q,v')=q$, implying that $(u',v')$ is the normalization of itself. Hence, $(u',v')\in\Lang{\F}$. Since $\F$ is saturated and $u'v'^\omega = uv^\omega$, it follows that $(u,v)\in\Lang{\F}$, as required.

As for the other direction, consider a pair $(u,v) \in \Lang{\F}$, and let $(x,y)$ be the normalization of $(u,v)$ w.r.t.\ $\Q$. We will show that $xy^\omega\in L$, where $L$ is defined by Equation~(\ref{eq:L}), implying that $uv^\omega\in \Lang{\B}$. 
Let $q=\Q(x)$, so we have that $\P_q(y)$ reaches some accepting state $f$ of $\P_q$. Note, however, that it still does not guarantee that $y \in N_{q,f}$, since it might be that $\delta_q(f,y)\neq f$.

To prove that $xy^\omega\in L$, we will show that there is a pair $(x,y')\in\Sigmasp$ and an accepting state $f'\in\P_q$, such that $y' = y^t$ for some positive integer $t$, and $y'\in N_{q,f'}$; namely $\delta(q,y')= q$, $\P_q(y')= f'$, and $\delta_q(f',y')= f'$. Note first that since $\F$ is saturated, it follows that for every positive integer $i$, $(x,y^i)\in\Lang{\F}$, as $x(y^i)^\omega=xy^\omega$.

Now, for every positive integer $i$, $\P_q$ reaches some accepting state $f_i$ when running on $y^i$. Since $\P_q$ has finitely many states, for a large enough $i$, $\P_q$ must reach the same accepting state $\hat f$ twice when running on $y^i$. Let $h$ be the smallest positive integer such that $\P_q(y^h)=\hat f$, and $r$ the smallest positive integer such that $\delta_q(\hat f,y^r)=\hat f$. Now, one can verify that choosing $t$ to be an integer that is bigger than or equal to $h$ and is divisible by $r$ guarantees that $\delta(q,y^t)= q$ and $\delta_q(f',y^t)= f'$, where $f'=\P_q(y^t)$.
\end{proof}

\subsubsection*{To deterministic $\omega$-automata}
Deterministic \buchi\ and \cobuchi\ automata are not expressive enough for recognizing every $\omega$-regular language. We thus consider the translation of saturated \fdfa s to deterministic parity automata. A translation is possible by first polynomially translating the \fdfa\ into an \nba\ (Theorem~\ref{thm:fdfa-to-nba}) and then determinizing the latter into a \dpa\ (which might involve a $2^{O(n\log n)}$ state blowup \cite{Michel88}).

\begin{prop}\label{prop:FdfaToDpa}
Let $\F$ be a saturated \fdfa\ of size $(n,k)$. There exists a \dpa\ $\D$ of size $2^{O(n^2 k^3 \log n^2 k^3)}$, such that $•\Lang{\F}\Characterizes\Lang{\D}$.
\end{prop}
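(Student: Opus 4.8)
The plan is to obtain the \dpa\ by composing two constructions that are already available in the excerpt: the polynomial translation of a saturated \fdfa\ to an \nba\ (Theorem~\ref{thm:fdfa-to-nba}), followed by the determinization of that \nba\ into a \dpa. First I would invoke Theorem~\ref{thm:fdfa-to-nba} on the given saturated \fdfa\ $\F$ of size $(n,k)$ to obtain an \nba\ $\B$ with $O(n^2 k^3)$ states satisfying $\Lang{\F}\Characterizes\Lang{\B}$. Writing $N = O(n^2 k^3)$ for the number of states of $\B$, the next step is to apply a determinization procedure for \nba s, converting $\B$ into an equivalent \dpa\ $\D$; the literature already cited in the excerpt provides such a construction (see the remark following Proposition~\ref{prop:buchi-to-recurrent} and the reference to \cite{SCH09}), which incurs a $2^{O(N\log N)}$ state blowup while using $O(N)$ colors.

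The key size computation is then routine arithmetic with the blowup bound. Substituting $N = O(n^2 k^3)$ into $2^{O(N\log N)}$ yields a \dpa\ of size
\[
2^{O(n^2 k^3 \log(n^2 k^3))},
\]
which is exactly the bound claimed in the proposition. Since $\B$ is equivalent to $\F$ in the sense of characterizing the same $\omega$-regular language, and $\D$ is equivalent to $\B$ as an $\omega$-automaton, transitivity gives $\Lang{\F}\Characterizes\Lang{\D}$, completing the argument.

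I do not expect any serious obstacle here, since both ingredients are black boxes already established or cited: the forward translation is Theorem~\ref{thm:fdfa-to-nba}, and \nba\ determinization is a standard result. The only point deserving a line of care is confirming that the \dpa\ indeed \emph{characterizes} the same language in the technical sense of $\Characterizes$ rather than merely agreeing on $\omega$-words in general; this follows immediately because all the relevant automata recognize/characterize the same $\omega$-regular language, and $\Characterizes$ is determined by agreement on ultimately periodic words, on which equivalence of \nba\ and its determinization is guaranteed. Thus the proposition is essentially a corollary obtained by chaining the two constructions and tracking the size bound through the exponential.
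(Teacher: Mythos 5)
Your proposal is correct and follows essentially the same route as the paper, which likewise obtains the \dpa\ by first applying Theorem~\ref{thm:fdfa-to-nba} to get an \nba\ with $O(n^2k^3)$ states and then invoking standard \nba\ determinization with its $2^{O(N\log N)}$ blowup. Your extra remark on why $\Characterizes$ is preserved through the chain is a sound (if implicit in the paper) observation, since both notions are determined by agreement on ultimately periodic words.
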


We show below that an exponential state blowup is inevitable.\footnote{This is also the case when translating \fdfa s to deterministic Rabin \cite{Rab69} and Streett \cite{Str82} automata, as explained in Remark~\ref{rem:FdfaToRabinAndStreett}.}
The family of languages $\{L_n\}_{n\geq 1}$ below demonstrates the inherent gap between \fdfa s and \dpa s; an \fdfa\ for $L_n$ may only ``remember'' the smallest and biggest read numbers among $\{1,2,...,n\}$, using $n^2$ states, while a \dpa\ for it must have at least $2^{n-1}$ states.

We define the family of languages $\{L_n\}_{n\geq 1}$ as follows. 
The alphabet of $L_n$ is $\{1,2,...,n\}$, and a word belongs to it iff the following two conditions are met:
\begin{itemize}
\item A letter $i$ is always followed by a letter $j$, such that $j \leq i+1$. For example, $533245\ldots$ is a bad prefix, since 2 was followed by 4, while $55234122\ldots$ is a good prefix.
\item The number of letters that appear infinitely often is odd. For example, $2331(22343233)^\omega$ is in $L_n$, while $1(233)^\omega$ is not.
\end{itemize}
We show below how to construct, for every $n\geq 1$, a saturated \fdfa\ of size polynomial in $n$ that characterizes $L_n$. Intuitively, the leading automaton handles the safety condition of $L_n$, having $n+1$ states, and ensuring that a letter $i$ is always followed by a letter $j$, such that $j \leq i+1$. The progress automata, which are identical, maintain the smallest and biggest number-letters that appeared, denoted by $s$ and $b$, respectively. Since a number-letter $i$ cannot be followed by a number-letter $j$, such that $j>i+1$, it follows that the total number of letters that appeared is equal to $b-s+1$. Then, a state is accepting iff $b-s+1$ is odd.

\begin{lem}\label{lem:FdfaForLn}
Let $n\geq 1$. There is a saturated \fdfa\ of size  $(n+1,n^2)$ characterizing $L_n$.
\end{lem}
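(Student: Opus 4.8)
The plan is to construct a saturated \fdfa\ of size $(n+1,n^2)$ following the intuition already sketched before the lemma, then verify separately that it characterizes $L_n$ and that it is saturated.

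\medskip
\noindent\textbf{Construction.}
First I would build the leading automaton $\Q$ to enforce the safety condition. Over the alphabet $\{1,\ldots,n\}$, I would take $Q = \{1,\ldots,n\}\cup\{\bot\}$ where the state $i$ records the last letter read, and $\bot$ is an absorbing ``dead'' state reached once the safety condition is violated. The initial state $\initstate$ is a fresh start-marker (or I can let the first letter freely select its state, folding the initial state into one of the $i$'s to keep the count at $n+1$). The transition is $\delta(i,j)=j$ if $j\le i+1$ and $\delta(i,j)=\bot$ otherwise, with $\bot$ looping to itself on every letter. This gives $|\Q|=n+1$. Next I would define a single progress automaton template $\P$, used for every non-$\bot$ state $q$ (and a trivially rejecting one-state automaton for $\bot$, since no good word loops there). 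The states of $\P_q$ track the pair $(s,b)$ of the smallest and biggest letters seen so far along the period-word, initialized from the first letter read; there are at most $\binom{n}{2}+n = O(n^2)$ such pairs, so $|\P_q|\le n^2$ as required. A state $(s,b)$ is accepting iff $b-s+1$ is odd.

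\medskip
\noindent\textbf{Correctness ($\Lang{\F}\Characterizes L_n$).}
I would fix a pair $(u,v)$ and let $(x,y)$ be its normalization w.r.t.\ $\Q$. If $uv^\omega$ violates the safety condition, then reading $x$ already drives $\Q$ to $\bot$ (or the violation occurs within the period, which by normalization repeats and thus also forces $\bot$ to be the looping state), so $\P_{\bot}$ rejects and $(u,v)\notin\Lang{\F}$; conversely any safety-violating good-prefix argument shows $uv^\omega\notin L_n$. When safety holds, the key observation is that by the safety condition a letter $i$ can only be followed by letters $\le i+1$, so the set of letters occurring infinitely often in $uv^\omega$ is exactly an interval $\{s,s+1,\ldots,b\}$, where $s$ and $b$ are the least and greatest letters occurring in the period $y$. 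Hence the number of infinitely-occurring letters is $b-s+1$, which matches precisely the quantity that $\P_{\Q(x)}$ computes when reading $y$ and tests for oddness. Therefore $\P_{\Q(x)}(y)$ is accepting iff $b-s+1$ is odd iff $uv^\omega\in L_n$. This establishes $\Lang{\F}\Characterizes L_n$.

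\medskip
\noindent\textbf{Saturation.}
Rather than argue saturation directly by comparing two decompositions, I would invoke the preceding correctness argument: I have shown that $(u,v)\in\Lang{\F}$ iff $uv^\omega\in L_n$, and membership of $uv^\omega$ in $L_n$ depends only on the $\omega$-word $uv^\omega$, not on the chosen decomposition $(u,v)$. Thus for any two pairs with $uv^\omega=u'v'^\omega$, both are accepted or both rejected, which is exactly saturation. I expect the main obstacle to be pinning down the interval claim cleanly, namely that safety forces the infinitely-occurring letters to form a contiguous block $\{s,\ldots,b\}$ so that the count equals $b-s+1$; this requires a small argument that if both $s$ and $b$ occur infinitely often then every intermediate value must too (a letter cannot ``jump'' upward past $b$ from below, and to return to $b$ infinitely often it must pass through each intermediate level), and care must be taken that $s,b$ are read off the period rather than the transient prefix. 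Everything else is a routine state count and a direct reduction of saturation to the already-proven language characterization.
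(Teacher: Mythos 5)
Your construction coincides with the paper's: a leading automaton that tracks the last letter read with an absorbing $\bot$ sink (the paper resolves your initial-state quibble exactly as your parenthetical suggests, folding into the state for the maximal letter $n$, from which every first letter is legal, and similarly avoids an extra progress-automaton start state by initializing inside the grid at $(n,1)$), together with progress automata over pairs $(s,b)$ accepting when $b-s$ is even. Your correctness and saturation arguments---including the interval claim that safety forces the infinitely-occurring letters to be exactly $\{s,\ldots,b\}$, and the observation that normalization pushes any safety violation of the period into $\Q(x)=\bot$---are sound and in fact more detailed than the paper's proof, which gives only the formal construction and remarks that, thanks to normalization, the progress automata need not handle the safety requirement.
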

\begin{proof}
We formally define an \fdfa\ $\F=(\Q, \PP)$ for $L_n$ over $\Sigma=\{1,2,\ldots,n\}$, as follows.

The leading automaton is $\Q=(\Sigma,Q,\initstate,\delta)$, where $Q=\{\bot, q_1, q_2, \ldots, q_n\}$; $\initstate = q_n$; and for every $i,j\in[1..n], \delta(q_i,j) = q_j$ if $j \leq i+1$, and $\bot$ otherwise, and $\delta(\bot,j)=\bot$.

The progress automaton for the state $\bot$ consists of a single non-accepting state with a self-loop over all letters.

For every $i\in[1..n]$, the progress automaton for $q_i$ is $\P_i =(\Sigma,P_i,\initstate_i,\delta_i, F_i)$, where:
\begin{itemize}
\item $P_i = [1..n]\times [1..n]$
\item $\initstate_i = (n, 1)$
\item $\delta_i$: For every $\sigma\in\Sigma$ and $s,b\in[1..n], \delta_i((s,b),\sigma)= (\min(s,\sigma) , \max(b, \sigma))$.
\item $F_i = \{  (s,b) \ST b-s \mbox{ is even }\}$
\end{itemize}
Notice that the progress automaton need not handle the safety requirement, as the leading automaton ensures it, due to the normalization in the acceptance criterion of an \fdfa. \end{proof}

A \dpa\ for $L_n$ cannot just remember the smallest and largest letters that were read, as these letters might not appear infinitely often. Furthermore, we prove below that the \dpa\ must be of size exponential in $n$, by showing that its state space must be doubled when moving from $L_n$ to $L_{n+1}$.

\begin{lem}\label{lem:DpaForLn}
Every \dpa\ that recognizes $L_n$ must have at least $2^{n-1}$ states.
\end{lem}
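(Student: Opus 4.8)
The plan is to prove the bound by induction on $n$, following the doubling strategy suggested just above the lemma. Writing $m_n$ for the least number of states of any \dpa\ recognizing $L_n$, I will establish $m_1=1$ and the inductive step $m_{n+1}\geq 2\,m_n$, which together give $m_n\geq 2^{n-1}$. It is worth stressing at the outset why a naive argument cannot work: for a \emph{deterministic} automaton, two prefixes reaching the same state have identical futures, so the number of states is bounded below only by the number of right-congruence classes of $L_n$; but since membership is a tail property and the safety part only tracks the last letter, $L_n$ has merely $O(n)$ right-congruence classes. Hence no prefix-experiment can force $2^{n-1}$ states, and the lower bound must be extracted from the parity \emph{coloring} rather than from reachability.

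Before the induction I would record a convenient normal form. Because a letter $i$ may only be followed by a letter $\leq i+1$, the set $S$ of letters occurring infinitely often in a safe word is necessarily an \emph{interval} $[s..b]$: a gap in $S$ would leave the larger part of $S$ unreachable from the smaller part without using some letter outside $S$ infinitely often. Consequently a safe word lies in $L_n$ iff its recurrent interval $[s..b]$ has odd length, i.e.\ $b-s$ is even. This matches the accepting sets $F_i$ of the \fdfa\ of Lemma~\ref{lem:FdfaForLn} and makes the role of the top letter transparent: for any word over $\{1,\dots,n\}$ whose recurrent interval reaches the top, $[s..n]$, inserting $n+1$ infinitely often (legally, right after visits to $n$) produces a word with recurrent interval $[s..n+1]$, flipping the length parity and hence flipping membership between $L_n$ and $L_{n+1}$. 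Thus the new top letter $n+1$ acts as a \emph{parity toggle}.

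The heart is the inductive step, and my approach is to locate inside any \dpa\ $\D$ for $L_{n+1}$ two \emph{disjoint} pools of states, each large enough to recognize $L_n$. The first pool $R_0$ is obtained by restricting $\D$ to the alphabet $\{1,\dots,n\}$ and to the states reachable by $\{1,\dots,n\}$-words; since $L_{n+1}\cap\{1,\dots,n\}^\omega=L_n$, this restriction is a \dpa\ for $L_n$, so $|R_0|\geq m_n$. The second pool $R_1$ should consist of the states carrying the recurrent loops whose interval includes $n+1$. By the parity-toggle observation these loops must have the opposite min-color parity from the corresponding top-reaching loops realized on $R_0$, and the crux of the proof is to show that for this reason they cannot be carried on the states of $R_0$ at all. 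I expect to prove this disjointness by a \emph{loop-surgery} argument: if some state $q$ were shared between a correct top-reaching $[s..n]$-loop and an $[s..n+1]$-loop, one could splice the two runs through $q$ into a single recurrent loop whose minimal color is forced to the parity of one interval while the spliced word's recurrent interval has the parity of the other, contradicting $\Lang{\D}=L_{n+1}$. Granting the disjointness, the second pool (after deleting the top letter and relabelling its loops) again recognizes $L_n$, so $|R_1|\geq m_n$, whence $m_{n+1}=|\D|\geq |R_0|+|R_1|\geq 2\,m_n$. The base case $n=1$ is immediate: over the one-letter alphabet $\{1\}$ the only word is $1^\omega\in L_1$, so one state suffices and $m_1=1=2^0$.

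I expect the disjointness/doubling lemma to be the main obstacle, precisely because the two copies are \emph{indistinguishable by any prefix experiment} (the $O(n)$ right-congruence classes give no leverage). The separation therefore has to come entirely from the acceptance condition, through the splicing construction, and the delicate point is to arrange the spliced loop so that its minimal color is provably of the wrong parity for the resulting recurrent interval; ensuring that the colors collide in exactly this way, uniformly over the choice of lower endpoint $s$, is where the real work lies.
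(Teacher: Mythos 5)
You have correctly identified the paper's central mechanism: restrict the \dpa\ for $L_{n+1}$ to the sub-alphabets $\{1,\dots,n\}$ and $\{2,\dots,n+1\}$, and rule out state-sharing between the two resulting structures by a splice, using the fact that in a parity automaton the minimal color of the union of two infinity sets agrees with the common verdict on the two sub-loops, while passing from $n$ to $n+1$ recurrent letters flips membership. However, your induction framework has a genuine gap that your own splice cannot close. The splice, even when it works, only rules out a state lying simultaneously on certain \emph{pairs of loops}; it does not give disjointness of your pools $R_0$ and $R_1$ \emph{as sets of states}. $R_0$ (everything reachable by $\{1,\dots,n\}$-words) contains transient states and states on ``low'' loops, e.g.\ a loop labelled only by the letter $1$, and for such a state shared with an $[s'..\,n{+}1]$-loop the splice yields no contradiction: the $1$-loop has one recurrent letter, the other loop has $n-s'+2$, the mixed loop has interval $[1..\,n{+}1]$, and for many choices of $n$ and $s'$ the verdict forced on the mixed loop by the union-of-mins fact is consistent with the membership of the mixed word. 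As you yourself observe, the contradiction requires the two spliced loops to carry recurrent intervals of \emph{equal} cardinality, and arranging this ``uniformly over $s$'' is not a residual detail -- it is the crux, and pools defined by reachability and by mere occurrence of the letter $n{+}1$ do not enforce it. A second, independent gap is the claim $|R_1|\geq m_n$: $R_1$ is a union of cycles, not a subautomaton with an initial state and a well-defined language, so the minimality of $m_n$ cannot be invoked for it. Consequently the step $m_{n+1}\geq 2m_n$ does not follow from what you prove.

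The paper repairs exactly these two defects by strengthening the induction invariant: rather than bounding the number of states of the whole automaton, it proves (for every left-quotient of $L_n$, which is why it introduces the family of languages $L_{n,u}$) that there are words $u,v$ such that $v$ contains \emph{all} letters of $\Sigma_n$, the run on $u$ reaches a live state $p$, and the cycle at $p$ along $v$ visits at least $2^{n-1}$ distinct states. In the step, it applies this hypothesis alternately to the two restrictions $\D'$ (letters $\{1,\dots,n\}$) and $\D''$ (letters $\{2,\dots,n+1\}$, isomorphic to an $L_n$-quotient via $i\mapsto i-1$), interleaving the resulting words until the deterministic run closes one big loop. The ``contains all letters'' clause is precisely what guarantees that each spliced sub-cycle has exactly $n$ recurrent letters (intervals $[1..\,n]$ and $[2..\,n{+}1]$), making your parity toggle uniform; the splice then shows any $v'_i$-segment and $v''_j$-segment of the big loop are state-disjoint, so the big loop itself visits at least $2^n$ states -- counting along one cycle, never over global pools. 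To rescue your proposal you would need to replace the pool accounting by some such cycle-based invariant; as written, the doubling argument fails at both the disjointness of $R_0$ and $R_1$ and the lower bound on $|R_1|$.
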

\begin{proof}
\newcommand{\LL}{\mathfrak{L}}
The basic idea behind the proof is that the \dpa\ cannot mix between 2 cycles of $n$ different letters each. This is because a mixed cycle in a parity automaton is accepting/rejecting if its two sub-cycles are, while according to the definition of $L_n$, a mixed cycle might reject even though both of its sub-cycles accept, and vice versa. Hence, whenever adding a letter, the state space must be doubled.

In the formal proof below, we dub a reachable state from which the automaton can accept some word a \emph{live state}, and for every $n\in\NN\setminus\{0\}$, define the alphabet $\Sigma_n=\{1, 2, \ldots, n\}$. Consider a \dpa\ $\D_n$ over $\Sigma_n$ that recognizes $L_n$, and let $q$ be some live state of $\D_n$. Observe that $\Lang{\D_n^q}$, namely the language of the automaton that we get from $\D_n$ by changing the initial state to $q$, is the same as $L_n$ except for having some restriction on the word prefixes. 
More formally, for every $n\in\NN$ and $u\in \Sigma_n^*$, we define the language $L_{n,u}=\{w~|~uw\in L_n\}$, and let $\LL_n$ denote the set of languages $\{ L_{n,u}\neq\emptyset~|~u\in\Sigma_n^*\}$. Given some $L_{n,u}$, since $L_{n,u}\neq\emptyset$, there is a live state $q$ that $\D_n$ reaches when reading $u$, and we have $L_{n,u}=\Lang{\D_n^q}$. Conversely, given a \dpa\ $D_n$ for $L_n$ and a live state $q$ of $\D_n$, let $u$ be a finite word on which $\D_n$ reaches $q$, then  $\Lang{\D_n^q}=L_{n,u}$. 

We prove by induction on $n$ the following claim, from which the statement of the lemma immediately follows: Let $\D_n$ be a \dpa\ over $\Sigma_n$ that recognizes some language in $\LL_n$. Then there are finite words $u,v \in\Sigma_n^*$, such that: 
\begin{enumerate}[label=\roman*)]
\item $v$ contains all the letters in $\Sigma_n$;  
\item the run of $\D_n$ on $u$ reaches some live state $p$; and 
\item the run of $\D_n$ on $v$ from $p$ returns to $p$, while visiting at least $2^{n-1}$ different  states.
\end{enumerate}
The base cases, for $n\in\{1,2\}$, are simple, as they mean a cycle of size at least $1$ over $v$, for $n=1$, and at least $2$ for $n=2$. Formally, for $n=1$, $L_1=\{1^\omega\}$ and $\LL_1=\{L_1\}$. A \dpa\ $\D_1$ for $L_1$ must have a live state $p$ that is visited infinitely often when $\D_1$ reads $1^\omega$. Thus, there is a cycle from $p$ back to $p$ along a word $v$ of length at least 1, containing the letter `1'. For $n=2$, it is still the case that the restriction on the next letter (to be up to 1-bigger than the current letter) does not influence. Hence, $L_2=\{u1^\omega \ST u\in\Sigma_2^*\} \cup \{u2^\omega \ST u\in\Sigma_2^*\}$ and $\LL_2=\{L_2\}$. Consider a \dpa\ $\D_2$ for $L_2$, having $k$ states. The run of $\D_2$ over the finite word $(12)^k$ must reach some live state $p$ twice. Thus, there is a cycle from $p$ back to itself along a word $v$ that contains both `1' and `2'. Observe that $v$ must be of length at least 2, as otherwise there are only self loops from $p$, implying that $\D_2^p$ either accepts or rejects all words, in contradiction to the definition of $\D_2$. 

In the induction step, we consider a \dpa\ $\D_{n+1}$, for $n\geq 2$, that recognizes some language $L \in \LL_{n+1}$. We shall define $\D'$ and $\D''$ to be the \dpa s that result from $\D_{n+1}$ by removing all the transitions over the letter $n\mathord{+}1$ and by removing all the transitions over the letter $1$, respectively. 

Observe that for every state $q$ that is live w.r.t.\ $\D_{n+1}$, we have that $\Lang{\D'^q}\in \LL_n$, namely the language of the \dpa\ that results from $\D_{n+1}$ by removing all the transitions over the letter $n\mathord{+}1$ and setting the initial state to $q$ is in $\LL_n$. This is the case since even if $q$ is only reachable via the letter $n\mathord{+}1$, it must have outgoing transitions over letters in $[2..n]$. Analogously, $\Lang{\D''^q}$ is isomorphic to a language in $\LL_n$ via the alphabet mapping of $i \mapsto (i-1)$. Hence, for every state $q$ that is live w.r.t.\ $\D_{n+1}$, the induction hypothesis holds for $\D'^q$ and $\D''^q$.

We shall prove the induction claim by describing words $u,v\in\Sigma_{n+1}^*$, and showing that they satisfy the requirements above w.r.t.\ $\D_{n+1}$.
We iteratively construct words $u'_i, u''_i, v'_i, v''_i$ until, roughly speaking, the run of $\D_{n+1}$ on the concatenation of these words closes a loop. 
More precisely, upon the first iteration $k$ for which there exists $j<k$ such that $\D_{n+1}(u'_1 u''_1 u'_2 u''_2 \ldots u'_j u''_j) = \D_{n+1}(u'_1\, u''_1\, u'_2\, u''_2 \ldots u'_k\, u''_k)$, we stop the iteration and 
define $u$ to be the word $u'_1 u''_1 u'_2 u''_2 \ldots u'_j u''_j$ and $v$ to be the word $u'_{j+1}\, v'_{j+1}\, u''_{j+1}\, v''_{j+1} \ldots u'_k\, v'_k\, u''_k\, v''_k$.

\vspace{2mm}
\noindent For the first iteration, we define:
\begin{itemize}
\item $u'_1$ and $v'_1$ are the words that follow from the induction hypothesis on $\D'^{q_1}$, where $q_1$ is the initial state of $\D_{n+1}$.
\item $u''_1$ and $v''_1$ are the words that follow from the induction hypothesis on $\D''^{q'_1}$, where $q'_1$ is the state that $\D_{n+1}$ reaches when reading $u'_1$.
\end{itemize}

For the next iterations, we define for every $i > 1$: 
\begin{itemize}
\item $u'_i$ and $v'_i$ are the words that follow from the induction hypothesis on $\D'^{q_i}$, where $q_i$ is the state that $\D_{n+1}$ reaches when reading $u'_1\, u''_1\, \ldots  u'_{i-1}\, u''_{i-1}$. (The state $q_i$ indeed belongs to $\D'^{q_i}$, since it has outgoing transitions on some letters in $[1..n]$.)
\item $u''_i$ and $v''_i$ are the words that follow from the induction hypothesis on $\D''^{q'_i}$, where $q'_i$ is the state that $\D_{n+1}$ reaches when reading $u'_1\, u''_1\, \ldots  u'_{i-1}\, u''_{i-1} \, u'_i$. (The state $q'_i$ indeed belongs to $\D''^{q_i}$, since it has outgoing transitions on some letters in $[2..n\mathord{+}1]$.)
\end{itemize}
Note that, for all $i$, by the induction hypothesis $v'_i$ contains all letters of $\Sigma_n$ and $v''_i$ 
contains all the letters in $\Sigma_{n+1}\setminus\{1\}$. Since $v$ is composed of $v'_i$'s and $v''_i$'s it follows that $v$ contains all the letters in $\Sigma_{n+1}$.  By the definition of $u$ and $v$, we also have that the run of $\D_{n+1}$ on $u$ reaches some live state $p$, and the run of $\D_{n+1}$ on $v$ from $p$ returns to $p$. 
Moreover, for every prefix  $v_1$ of $v$ that reaches  a state $s$, we have that $s$ is a live state, and for $v_2$ such that $v=v_1v_2$, the run of $\D_{n+1}$ on  $v_2v_1$  returns to $s$. 
Now, we need to prove that the run of $\D_{n+1}$ on $v$ from $p$ visits at least $2^{n}$ states.

\begin{figure}
	\begin{center}
		\scalebox{.99}{
			\begin{tikzpicture}
			\node (wbegin) {};
			\node [right=2.5cm of wbegin] (wmid) {};
			\node [right=10.4cm of wmid] (wend) {};
			\node [right=1.1cm of wmid] (vpribegin) {};
			\node [right=1.8cm of wmid] (vprimid) {};
			\node [right=2.9cm of wmid] (vpriend) {};
			\node [right=5.0cm of wmid] (vdpribegin) {};
			\node [right=6.05cm of wmid] (vdprimid) {};			
			\node [right=6.9cm of wmid] (vdpriend) {};
			
			\node [below=.3cm of wbegin] (wbegins) {};
			\node [below=.3cm of wmid] (wmids) {};
			\node [below=.3cm of wend] (wends) {};
			
			\node [above=.54cm of wmid] (wmidn) {};
			\node [above=.54cm of vprimid] (vprimidn) {};
			
			\node [above=1.1cm of wmid] (wmidnn) {};
			\node [above=1.1cm of vdprimid] (vdprimidnn) {};
			
			\node [above=1.4cm of wmid] (wmidnnn) {};
			\node [above=1.4cm of vdprimid] (vdprimidnnn) {};

			\draw (wbegin.center) -- (wend.center);
			\draw (wbegin.center) -- (wend.center);
			\draw (wbegin.north) -- (wbegin.south);
			\draw (wmid.north) -- (wmid.south);
			\draw (wend.north) -- (wend.south);
			\draw (vpribegin.north) -- (vpribegin.south);
			\draw (vprimid.north) -- (vprimid.south);
			\draw (vdprimid.north) -- (vdprimid.south);
			\draw (vpriend.north) -- (vpriend.south);
			\draw (vdpribegin.north) -- (vdpribegin.south);
			\draw (vdpriend.north) -- (vdpriend.south);
			
			\draw [dashed] (wmidnn.north) -- (wmids.south);
			\draw [dashed] (vprimidn.north) -- (vprimid.south);
			\draw [dashed] (vdprimidnn.north) -- (vdprimid.south);

			\draw [decorate, decoration={brace,amplitude=8pt}]
			(vpribegin.north) -- (vpriend.north)
			node [midway, above, yshift=8pt,xshift=1pt] (wlabel) {$v'$};
			
			\draw [decorate, decoration={brace,amplitude=8pt}]
			(vdpribegin.north) -- (vdpriend.north)
			node [midway, above, yshift=8pt, xshift=2.5pt] (wlabel) {$v''$};
			
			\draw [decorate, decoration={brace, mirror,amplitude=5pt}]
			(vpribegin.south) -- (vprimid.south)
			node [midway, below, yshift=-4pt] (wlabel) {$l'$};
			
			\draw [decorate, decoration={brace, mirror,amplitude=5pt}]
			(vprimid.south) -- (vpriend.south)
			node [midway, below, yshift=-4pt] (wlabel) {$r'$};
			
			\draw [decorate, decoration={brace, mirror,amplitude=5pt}]
			(vdpribegin.south) -- (vdprimid.south)
			node [midway, below, yshift=-4pt] (wlabel) {$l''$};
			
			\draw [decorate, decoration={brace, mirror,amplitude=5pt}]
			(vdprimid.south) -- (vdpriend.south)
			node [midway, below, yshift=-4pt] (wlabel) {$r''$};
			
			\draw [decorate, decoration={brace, mirror,amplitude=12pt}]
			(wbegins.south) -- (wmids.south)
			node [midway, below, yshift=-10pt] (wlabel) {$u$};
			
			\draw [decorate, decoration={brace, mirror,amplitude=12pt}]
			(wmids.south) -- (wends.south)
			node [midway, below, yshift=-10pt] (wlabel) {$v$};
			
			\draw [decorate, decoration={brace,amplitude=10pt}]
			(wmidn.north) -- (vprimidn.north)
			node [midway, above, yshift=8pt] (wlabel) {$x$};
			
			\draw [decorate, decoration={brace,amplitude=10pt}]
			(wmidnnn.south) -- (vdprimidnnn.south)
			node [midway, above, yshift=8pt] (wlabel) {$y$};

			\end{tikzpicture}
		}
	\end{center}
	\caption{Subwords of $v$.}\label{fig:subwords}
\end{figure}
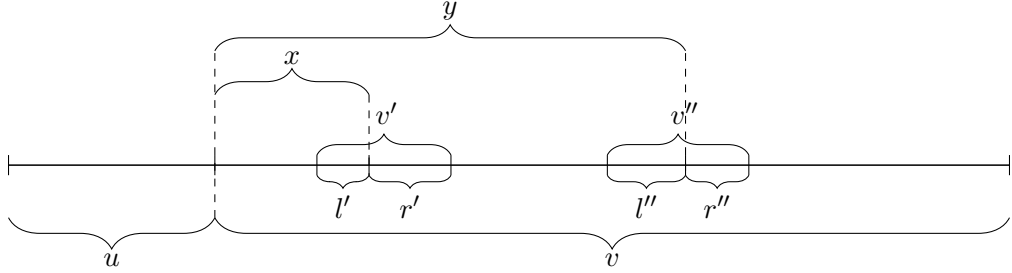

Let $v'$ be some $v'_i$ subword of $v$ and likewise let $v''$ be some $v''_j$ subword of $v$. 
We claim that the run of $\D_{n+1}$ on $uv^\omega$ traverses disjoint sets of states while reading the subwords $v'$ and $v''$. This will provide the required result, since when traversing either $v'$ or $v''$ we know that $\D_{n+1}$  visits at least $2^{n-1}$ different states, by the induction hypothesis.

Assume, by way of contradiction, a state $s$ that is visited by $\D_{n+1}$ when traversing both $v'$ and $v''$.
Let $l'$ and $r'$ be the parts of $v'$ that $\D_{n+1}$ reads before and after reaching $s$, respectively, and $l''$ and $r''$ the analogous parts of $v''$, as shown in Fig.~\ref{fig:subwords}. Let $x$ be the subword of $uv^\omega$ between $u$ and $r'$ and let $y$ be the subword of $uv^\omega$ between $u$ and $r''$.
Now, define the $\omega$-words $m'=u x  (r' \, l')^\omega$, $m''=u y \, (r'' \, l'')^\omega$, and $m=uy \, (r ''\,l'' \, r' \, l')^\omega$. 

Observe that $m'$ and $m''$ both belong or both do not belong to $L$, since there is the same number of letters ($n$) that appear infinitely often in each of them. The word $m$, on the other hand, belongs to $L$ if $m'$ and $m''$ do not belong to $L$, and vice versa, since $n+1$ letters appear infinitely often in it. However, the set of states that are visited infinitely often in the run of $\D_{n+1}$ on $m$ is the union of the sets of states that appear infinitely often in the runs of $\D_{n+1}$ on $m'$ and $m''$. Thus, if $\D_{n+1}$ accepts both $m'$ and $m''$ it also accepts $m$, and if it rejects both $m'$ and $m''$ it rejects $m$.  (This follows from the fact that the minimal number in a union of two sets is even/odd iff the minimum within both sets is even/odd.) Contradiction.
\end{proof}

\begin{thm}
There is a family of languages $\{L_n\}_{n\geq 1}$ over the alphabet $\{1,2,\ldots,n\}$, such that for every $n\geq 1$, there is a saturated \fdfa\ of size  $(n+1,n^2)$ that characterizes $L_n$, while a \dpa\ for $L_n$ must be of size at least $2^{n-1}$.
\end{thm}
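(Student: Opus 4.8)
The plan is to obtain the theorem as an immediate corollary of the two preceding lemmas, since the family $\{L_n\}_{n\geq 1}$ and both bounds have already been established. For the upper bound, Lemma~\ref{lem:FdfaForLn} produces a saturated \fdfa\ of size exactly $(n+1,n^2)$ characterizing $L_n$; for the lower bound, Lemma~\ref{lem:DpaForLn} shows that any \dpa\ recognizing $L_n$ has at least $2^{n-1}$ states. Fixing the family to be the one defined above and invoking both lemmas for each $n$ completes the argument in a single step.

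To sketch how the two ingredients arise: the \fdfa\ upper bound is the easy direction. I would let the leading automaton $\Q$ enforce the safety condition ``letter $i$ is followed only by letters $\le i+1$'', using one state per letter together with a rejecting sink $\bot$, giving $n+1$ states. Each progress automaton then only needs to decide the acceptance of a \emph{normalized} period, so it can track the smallest and largest letters $(s,b)$ read, using $n^2$ states. The crucial observation is that, given the safety condition, the letters occurring in the period form the contiguous block $[s..b]$, so the number of distinct recurring letters equals $b-s+1$; acceptance is thus the parity test ``$b-s$ even'', and saturation is inherited from the normalization built into \fdfa\ acceptance.

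The substance, and the main obstacle, lies in the \dpa\ lower bound of Lemma~\ref{lem:DpaForLn}. I would argue by induction on $n$ that a \dpa\ for a residual of $L_n$ contains a live state $p$ and a word $v$ containing every letter of $\Sigma_n$ that loops $p$ back to itself through at least $2^{n-1}$ states. In the step one splits the automaton by deleting all $(n\mathord{+}1)$-transitions and, separately, all $1$-transitions, obtaining two machines to which the hypothesis applies, and then stitches their cycles together. The heart of the argument is a \emph{no-mixing} claim: the $v'$-portion and $v''$-portion of the constructed loop must visit disjoint state sets. If they shared a state $s$, one could splice the two sub-cycles at $s$ into three ultimately periodic words $m',m'',m$ whose sets of infinitely recurring states satisfy $\mathrm{Inf}(m)=\mathrm{Inf}(m')\cup\mathrm{Inf}(m'')$; since the minimal color in a union has even parity iff both minima do, the \dpa\ must give $m$ the same verdict as $m'$ and $m''$, yet $m'$ and $m''$ have $n$ recurring letters while $m$ has $n+1$, so $L_n$ forces the opposite verdict on $m$ — the contradiction that forces the two halves to be disjoint and hence doubles the state count.
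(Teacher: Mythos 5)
Your proposal is correct and matches the paper exactly: the paper's proof of this theorem is literally a one-line citation of Lemmas~\ref{lem:FdfaForLn} and~\ref{lem:DpaForLn}, and your sketches of those two lemmas (the $(n+1)$-state safety leading automaton with $(s,b)$-tracking progress automata, and the inductive \dpa\ lower bound via splitting off the $(n\mathord{+}1)$- and $1$-transitions with the no-mixing/splicing contradiction on the parity of the minimal color in a union) reproduce the paper's own arguments. No gaps.
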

\begin{proof}
By Lemmas \ref{lem:FdfaForLn} and \ref{lem:DpaForLn}.
\end{proof}

\begin{rem}\label{rem:FdfaToRabinAndStreett}
A small adaptation to the proof of Lemma~\ref{lem:DpaForLn} shows an inevitable exponential state blowup also when translating a saturated \fdfa\ to a deterministic $\omega$-automaton with a stronger acceptance condition of Rabin \cite{Rab69} or Streett \cite{Str82}: A mixed cycle in a Rabin automaton is rejecting if its two sub-cycles are, and a mixed cycle in a Streett automaton is accepting if its two sub-cycles are. Hence, the proof of Lemma~\ref{lem:DpaForLn} holds for both Rabin and Streett automata if proceeding in the induction step from an alphabet of size $n$ to an alphabet of size $n+2$, yielding a Rabin/Streett automaton of size at least $2^{\frac{n}{2}}$.

As for translating a saturated \fdfa\ to a deterministic Muller automaton \cite{Mul63}, it is known that translating a \dpa\ of size $n$ into a deterministic Muller automaton might require the latter to have an accepting set of size exponential in $n$ \cite{Saf89,Bok17a}. Hence, by Theorem~\ref{thm:dpa-to-fdfa}, which shows a polynomial translation of \dpa s to \fdfa s, we get that translating an \fdfa\ to a deterministic Muller automaton entails an accepting set of exponential size, in the worst case.
\end{rem}

\section{Discussion}
The interest in \fdfa s as a representation for $\omega$-regular languages stems from the fact that they possess a correlation between the automaton states and the language right congruences, a property that traditional $\omega$-automata lack. This property is beneficial in the context of learning, and indeed algorithms for learning $\omega$-regular languages by means of saturated \fdfa s were recently provided \cite{AngluinF14,LCZL16}. 
Analyzing the succinctness of saturated \fdfa s and the complexity of their Boolean operations and decision problems, we believe that they provide an interesting formalism for representing $\omega$-regular languages. Indeed, Boolean operations and decision problems can be performed in nondeterministic logarithmic space and their succinctness lies between deterministic and nondeterministic $\omega$-automata.

\bibliographystyle{alpha}
\bibliography{bib}
\end{document}